\newcommand{\eval}[2][\right]{\relax
  \ifx#1\right\relax \left.\fi#2#1\rvert}
\newcommand{\pd}{{\partial}}
\newcommand{\al}{{\alpha}}
\newcommand{\la}{{\lambda}}
\newcommand{\mat}{\mathcal{M}}
\newcommand{\ub}{\underbrace}
\newcommand{\La}{{\Lambda}}
\newcommand{\er}{\eqref}
\newcommand{\cl}{\colon}
\newcommand{\zp}{\mathbb{Z}_+}
\newcommand{\ad}{{\rm ad\,}}
\newcommand{\mg}{\mathfrak{g}}
\newcommand{\vf}{\varphi}
\newcommand{\Com}{\mathbb{C}}
\newcommand{\cprime}{\/{\mathsurround=0pt$'$}}
\newtheorem{theorem}{Theorem}
\newtheorem{proposition}{Proposition}
\newtheorem{lemma}{Lemma}
\theoremstyle{definition}
\newtheorem{remark}{Remark}
\begin{document}


\keywords{
Wahlquist-Estabrook prolongation structures, 
generalized Landau-Lifshitz systems,  
quasigraded infinite-dimensional Lie algebras, 
Miura type transformations}

\subjclass{37K30, 37K35}



\title[Generalized Landau-Lifshitz systems and Lie algebras]
{Generalized Landau-Lifshitz systems  
and Lie algebras associated with higher genus curves}
\date{}

\author{S.~Igonin}
\address{Sergey Igonin \\
Department of Mathematics, University of Utrecht, P.O. Box 80010, 3508 TA Utrecht,
The Netherlands}
\email{igonin@mccme.ru}

\author{J.~van~de~Leur}
\address{Johan~van~de~Leur\\
Department of Mathematics, University of Utrecht, P.O. Box 80010, 3508 TA Utrecht,
The Netherlands}
\email{J.W.vandeLeur@uu.nl}

\author{G.~Manno}
\address{Gianni Manno \\
Universit\`a degli Studi di Milano-Bicocca, Dipartimento di Matematica e Applicazioni, Via Cozzi 53, 20125 Milano, Italy}
\email{gianni.manno@unimib.it}

\author{V.~Trushkov}
\address{Vladimir~Trushkov \\
University of Pereslavl, Sovetskaya 2, 152020 Pereslavl-Zalessky, Yaroslavl region, Russia}
\email{vladimir@trushkov.pereslavl.ru}

\begin{abstract}
The Wahlquist-Estabrook prolongation method 
allows to obtain for some PDEs a Lie algebra 
that is responsible for Lax pairs and B\"acklund transformations of certain type. 
We study the Wahlquist-Estabrook algebra 
of the $n$-dimensional generalization of the Landau-Lifshitz equation 
and construct an epimorphism from this algebra  
onto an infinite-dimensional quasigraded Lie algebra $L(n)$ of certain matrix-valued functions 
on an algebraic curve of genus $1\!+\!(n\!-\!3)2^{n-2}$. 
For $n=3,4,5$ we prove that the Wahlquist-Estabrook algebra is isomorphic 
to the direct sum of $L(n)$ and a $2$-dimensional abelian Lie algebra. 
Using these results, for any $n$ a new family of Miura type transformations (differential substitutions) 
parametrized by points of the above mentioned curve is constructed. 
As a by-product, we obtain a representation of $L(n)$ in terms of a finite number 
of generators and relations, which may be of independent interest.
\end{abstract}


\maketitle 

\section{Introduction}

In the last 25 years it has been well understood how to construct integrable PDEs  
from infinite-dimensional Lie algebras
(see, e.g.,~\cite{feher,z,gds,skr,skr-jmp} and references therein). 
The present paper addresses the inverse problem:  
given a system of PDEs, how to determine whether 
it is related to an infinite-dimensional 
Lie algebra and how to recover this Lie algebra? 
  
A partial answer to this question is provided by 
the so-called Wahlquist-Estabrook prolongation method~\cite{dodd,Prol}, 
which is an algorithmic procedure that for a given $(1+1)$-dimensional system of PDEs 
constructs a Lie algebra called the \emph{Wahlquist-Estabrook algebra}  
(or the \emph{WE algebra} in short). 
The WE algebra is responsible for Lax pairs and B\"acklund transformations of certain type. 
The method gives this algebra in terms of generators and relations. 
For some PDEs the explicit structure of the WE algebra 
was revealed, and as a result one obtained interesting infinite-dimensional 
Lie algebras (see, e.g., \cite{schief,kdv1,kn,ll} and references therein).  

In the original method of Wahlquist and Estabrook  
the obtained algebras lack any invariant coordinate-free meaning.   
Recently the WE algebra has been included in a sequence 
of Lie algebras that have a remarkable geometric interpretation: 
they are the analogue of the topological fundamental group 
for the category of PDEs~\cite{cfa,lie}. 
All finite-dimensional quotients of these Lie algebras 
have a coordinate-free meaning as symmetry algebras of certain coverings of PDEs~\cite{cfa} 
(the notion of coverings of PDEs by Krasilshchik and Vinogradov~\cite{rb,nonl} is 
a far-reaching geometric generalization of Wahlquist-Estabrook pseudopotentials). 
Moreover, an effective necessary condition for existence of a B\"acklund transformation 
connecting two given PDEs has been obtained in terms of these Lie algebras~\cite{cfa}, 
which has allowed to prove for the first time ever that some PDEs are not connected by any B\"acklund transformation~\cite{cfa}.
In our opinion, these results strongly suggest to compute and study 
the WE algebras for more PDEs. 

In this paper we apply the Wahlquist-Estabrook method to the following system 
\begin{equation}
\label{main}
S_t=\Bigl(S_{xx}+\frac32\langle S_x,S_x\rangle S\Bigl)_x+\frac32\langle S,RS\rangle S_x,
\quad \langle S,S\rangle=1,
\end{equation} 
where $S=(s^1(x,t),\dots,s^n(x,t))$ is a column-vector of dimension $n>1$, 
$\langle\cdot,\cdot\rangle$ is the standard scalar product, 
and $R=\mathrm{diag}(r_1,\dots,r_n)$ is a constant diagonal matrix 
with $r_i\neq r_j$ for $i\neq j$.   
This system was introduced in~\cite{mll} and possesses a Lax pair 
(a zero-curvature representation) parametrized by points of  
the following algebraic curve 
\begin{equation}
\label{curve}
\la_i^2-\la_j^2=r_j-r_i,\quad i,j=1,\dots,n,
\end{equation}
in the space $\Com^n$ with coordinates $\la_1,\dots,\la_n$.
According to~\cite{mll}, this curve is of genus ${1\!+\!(n\!-\!3)2^{n-2}}$. 

System~\er{main} has also an infinite number of symmetries, 
conservation laws~\cite{mll}, and a B\"acklund auto-transformation with a parameter~\cite{ll-backl}. 
Soliton-like solutions of~\er{main} can be found in~\cite{ll-backl}. 
According to~\cite{mll}, 
for $n=3$ this system coincides with the higher symmetry (the commuting flow) 
of third order for the well-known Landau-Lifshitz equation (see, e.g.,~\cite{faddeev}). 
Thus system~\er{main} is an $n$-dimensional generalization of the Landau-Lifshitz equation. 

Note that, to our knowledge, before the present paper 
the Wahlquist-Estabrook method was never applied 
to a system of PDEs connected with an algebraic curve of genus greater than $1$, 
and the explicit structure of the WE algebra was not computed 
for any system with more than two dependent variables 
(unknown functions $s^i(x,t)$). 

For arbitrary $n$ we construct an epimorphism from the WE algebra of~\er{main}
onto the infinite-dimensional Lie algebra $L(n)$ of certain $\mathfrak{so}_{n,1}$-valued 
functions on the curve~\er{curve}. 
Note that $L(n)$ is not graded, but is quasigraded~\cite{quasigraded}  
\begin{gather*}
L(n)=\bigoplus_{i=1}^\infty\bar L_i,\quad\  
[\bar L_i,\bar L_j]\subset \bar L_{i+j}+\bar L_{i+j-2},\\
\quad\dim\bar L_{2k-1}=n,\quad\dim\bar L_{2k}=\frac{n(n-1)}{2}\quad\forall\, 
k\in\mathbb{N}.
\end{gather*}
For $n=3,4,5$ we prove that the WE algebra 
is isomorphic to the direct sum of $L(n)$ and a two-dimensional abelian Lie algebra. 
In particular, for $n=3$ the WE algebra of~\er{main} 
is isomorphic to the WE algebra 
of the anisotropic Landau-Lifshitz equation~\cite{ll}.
For $n=2$ the curve~\er{curve} is rational and system~\er{main} belongs 
to the well-studied class of scalar evolutionary equations~\cite{kdv1,kn}, so we skip the case $n=2$.  

To achieve this, we prove that the algebra $L(n)$ is isomorphic for any $n\ge 3$ 
to the Lie algebra given by generators $p_1,\dots,p_n$ and relations   
\begin{gather}
\label{rel1i}
[p_i,[p_i,p_k]]-[p_j,[p_j,p_k]]=(r_j-r_i)p_k\,\quad\text{for}\,\ 
i\neq k,\ j\neq k,\\
\notag
[p_i,[p_j,p_k]]=0\,\quad\text{for}\,\ 
i\neq k,\ i\neq j,\ j\neq k.
\end{gather}
In our opinion, this algebraic result may be of independent interest. 
For $n=3$ it was proved in~\cite{ll}. 

It is known that for a given evolutionary system 
\begin{equation}
\label{ut}
u^i_t=F^i(u^j,u^k_{x},u^l_{xx},\dots),\quad i=1,\dots,m, 
\end{equation} 
the WE algebra helps also to find a `modified' system 
\begin{equation}
\label{vt}
v^i_t=H^i(v^j,v^k_{x},v^l_{xx},\dots),\quad i=1,\dots,m,
\end{equation} 
connected with~\er{ut} by a B\"acklund transformation of Miura type 
(sometimes also called a differential substitution)
\begin{equation}
\label{mt}
u^i=g^i(v^j,v^k_{x},v^l_{xx},\dots),\quad i=1,\dots,m.
\end{equation} 
That is, for any solution $v^1,\dots,v^m$ of~\er{vt} functions~\er{mt} form a solution of~\er{ut}, 
and for any solution $u^1,\dots,u^m$ of~\er{ut} locally there exist functions $v^1,\dots,v^m$ 
satisfying~\er{vt},~\er{mt}. 

For system~\er{main} a modified system~\er{vt} and a transformation~\er{mt} 
were not known. Using a suitable vector field representation 
of the WE algebra, we find a family of systems~\er{vt} and 
transformations~\er{mt} for~\er{main}. This family is parametrized by points 
of the curve~\er{curve}.  
Apparently, this is the first example of a family of Miura type transformations 
parametrized by points of a curve of genus greater than 1.    
  
We would like to make also the following observation. 
Clearly, relations~\er{rel1i} look somewhat similar to equations~\er{curve}. 
And indeed, formula~\er{qie} and Theorem~\ref{gnL} below explain 
how the generator $p_i$ is related to $\la_i$. 
Note that, as we show in Sections~\ref{n3},~\ref{n4},~\ref{n5}, 
at least for $n=3,4,5$ relations~\er{rel1i} arise from the internal jet space geometry of system~\er{main} 
using the Wahlquist-Estabrook procedure. 
Therefore, the Wahlquist-Estabrook method and its generalization in~\cite{cfa,lie} 
may be useful also for the following problem: 
given a system of PDEs, which is suspected to be integrable, 
how to recover an algebraic curve that most naturally parametrizes a possible Lax pair for this system? 
For example, applying the Wahlquist-Estabrook procedure to system~\er{main} 
without any knowledge of Lax pairs or algebraic curves behind~\er{main}, 
one obtains relations~\er{rel1i} for the WE algebra. 
Since $p_i$ should correspond to some matrix-valued functions on an algebraic curve, 
looking at~\er{rel1i} it is not so hard to guess that the curve 
should be of the form~\er{curve}. 
It would be very interesting to make this observation into 
a rigorous construction for more PDEs.

Several more integrable systems associated 
with the curve~\er{curve} were introduced in~\cite{mll,skr,skr1,skr2}, 
where the opposite approach is taken: 
they start with an infinite-dimensional Lie algebra very similar 
to our algebra $L(n)$ and construct integrable systems from this Lie algebra.   
Note that a representation of the Lie algebra 
in terms of a finite number of generators 
and relations was not obtained in~\cite{mll,skr,skr1,skr2}.  

The functions $S=(s^1(x,t),\dots,s^n(x,t))$ in~\er{main} 
and the parameters $\la_i,\,r_i$ in~\er{curve} may take values in $\Com$ or $\mathbb{R}$. 
In this paper the $\Com$-valued case is studied, but all results and proofs are valid also 
in the $\mathbb{R}$-valued case, if one replaces $\Com$ by $\mathbb{R}$ 
in the definitions. 

The paper is organized as follows. 

In Section~\ref{wea} we present a rigorous definition of 
WE algebras for arbitrary evolutionary systems 
and, therefore, for any systems of PDEs that can be written in evolutionary form. 
For this we use formal power series with coefficients in Lie algebras. 
For example, such series occur in computations of the WE algebra of 
the $f$-Gordon equation $u_{tt}-u_{xx}=f(u)$~\cite{shadwick}, 
which, as is well known, can be rewritten in evolutionary form as follows
\begin{equation}
\label{evol}
\begin{aligned}
u_t&=q,\\
q_t&=u_{xx}+f(u).
\end{aligned}
\end{equation}
We discuss also possible generalizations of the Wahlquist-Estabrook ansatz. 

In Sections~\ref{comput},~\ref{repres} the above mentioned results 
on the WE algebra of system~\er{main} are obtained, and in Section~\ref{miura} 
the family of Miura type transformations is constructed. 
The appendix contains the proof of technical Lemma~\ref{lemma}. 

The following abbreviations are used in the paper: 
WE = Wahlquist-Estabrook, ZCR = zero-curvature representation. 

\section{The general definition of WE algebras} 
\label{wea}

Originally the Wahlquist-Estabrook prolongation method 
was formulated in terms of differential forms. 
We prefer the vector fields version of it, which goes as follows.  
For a given $m$-component evolutionary system of PDEs of order $d\ge 1$
\begin{equation}
\label{sys}
u^i_t=F^i(u^1,\dots,u^m,\,u^1_1,\dots,u^m_1,\dots,u^1_d,\dots,u^m_d),\quad
i=1,\dots,m,\quad u^i_k=\frac{\pd^k u^i}{\pd x^k},
\end{equation}
consider the infinite-dimensional jet space with the coordinates 
\begin{equation}
\label{coor}
x,\,t,\,u^i_k,\quad i=1,\dots,m,\quad k=0,1,2,\dots,\quad u^i_0=u^i.
\end{equation}
The total derivative operators 
\begin{equation*}
D_x=\frac{\pd}{\pd x}+\sum_{i,\,k}u^i_{k+1}\frac{\pd}{\pd u^i_k},\quad\quad 
D_t=\frac{\pd}{\pd t}+\sum_{i,\,k}D_x^{k+1}(F^i)\frac{\pd}{\pd u^i_k}
\end{equation*}
are commuting vector fields on this space. 

Let $\al$ be an $(m\times d)$-matrix with entries 
$$
\al_{ik}\in\zp,\quad i=1,\dots,m,\quad k=0,\dots,d-1.
$$ 
Denote the set of such matrices by $\mat$. 
Denote by $U^\al$ the following product of coordinates~\er{coor} 
$$
U^\al=\prod_{\substack{i=1,\dots,m,\\k=0,\dots,d-1}}\bigl(u^i_k\bigl)^{\alpha_{ik}}.
$$
Consider two formal power series 
\begin{equation}
\label{xt}
X=\sum_{\al\in \mat}A^\al U^\al,\quad T=\sum_{\beta\in \mat}B^\beta U^\beta.
\end{equation}
in the variables 
\begin{equation}
\label{var}
u^i_k,\quad 1\le i\le m,\quad 0\le k\le d-1,
\end{equation}
where the coefficients $A^\alpha,\,B^\beta$ are elements of some (not specified yet) Lie algebra. 
The equation 
\begin{equation}
\label{dxdt}
[D_x+X,D_t+T]=D_x(T)-D_t(X)+[X,T]=0
\end{equation} 
in the space of formal power series 
is equivalent to some Lie algebraic relations for the elements $A^\alpha,\,B^\beta$. 
Here 
\begin{gather*}
D_x(T)=\sum_{\beta\in \mat}B^\beta\cdot D_x(U^\beta),\quad D_t(X)=\sum_{\al\in \mat}A^\al\cdot D_t(U^\al),\\
[X,T]=\sum_{\al,\beta\in \mat} [A^\al, B^{\beta}]\cdot U^\al\cdot U^{\beta}.
\end{gather*}

Let $F$ be the free Lie algebra generated by all the letters $A^\alpha,\,B^\beta$ for $\al,\beta\in \mat$. 
The quotient of $F$ over the above mentioned relations 
arising from equation~\er{dxdt} is called the \emph{Wahlquist-Estabrook Lie algebra} 
of system~\er{sys} (or the \emph{WE algebra} in brief).  
From now on $A^\alpha,\,B^\beta$ are elements of the WE algebra. 
Then~\er{xt} is the most general solution of~\er{dxdt} provided that $X,\,T$ are 
power series in variables~\er{var}. 

\begin{remark} 
For many systems~\er{sys} equation~\er{dxdt} implies that $X,\,T$ are of the form 
\begin{equation}
\label{xtfg}
X=\sum_{i=1}^{k_1}C_if_i,\quad T=\sum_{j=1}^{k_2}D_jg_j,\quad k_1,k_2\in\mathbb{N}, 
\end{equation}
where $f_i,\,g_j$ are analytic functions of variables~\er{var}, 
the functions $f_1,\dots,f_{k_1}$ are linearly independent, 
the functions $g_1,\dots,g_{k_2}$ are linearly independent,  
and $C_i,\,D_j$ are elements of the WE algebra. 
Expanding $f_i,\,g_j$ as power series in~\er{var}, we obtain 
that $A^\alpha,\,B^\beta$ from~\er{xt} are linear combinations of $C_i,\,D_j$ 
and, therefore, in this case $C_i,\,D_j$ can be taken as another set of generators 
of the same WE algebra. 

However, the cases when $X,\,T$ are formal power series and cannot be presented 
as finite sums of analytic functions do also occur. 
For example, this happens for the $f$-Gordon equation $u_{tt}-u_{xx}=f(u)$~\cite{shadwick}, 
which can be rewritten in evolutionary form~\er{evol}.
\end{remark}

\begin{remark}
A natural question arises, what happens if in equation~\er{dxdt} one considers 
power series $X$, $T$ in the variables 
$$
u^i_k,\quad 1\le i\le m,\quad 0\le k\le s, 
$$ 
for arbitrary $s$? 
It turns out that if $s>d-1$ then before solving~\er{dxdt} one should simplify $X$, $T$  
by so-called gauge transformations~\cite{cfa,lie}, and then one also obtains 
certain Lie algebras~\cite{cfa,lie}, which are generally bigger than the WE algebra for $s=d-1$.  
For example, for the Krichever-Novikov equation the WE algebra for $s=d-1$ is trivial, 
but the case $s=d$ does produce an interesting Lie algebra~\cite{kn}. 

Coordinate-free meaning of the Lie algebras for arbitrary $s$ is studied in~\cite{cfa,lie}. 
In this way one obtains new geometric invariants of PDEs, which allow to prove, 
for example, that some PDEs are not connected by any B\"acklund transformations~\cite{cfa,lie}.
\end{remark}

Let $\mg$ be a Lie algebra. 
Recall~\cite{faddeev} 
that $\mg$-valued functions $M,\,N$ of a finite number of variables~\er{coor} form  
a \emph{zero-curvature representation} (ZCR in short) for system~\er{sys} if 
\begin{equation}
\label{mn}
[D_x+M,D_t+N]=D_x(N)-D_t(M)+[M,N]=0. 
\end{equation}
If $\dim\mg<\infty$ then $M,\,N$ are supposed to be analytic or smooth $\mg$-valued functions, 
while if $\dim\mg=\infty$ then $M,\,N$ are formal power series with coefficients in $\mg$. 
For example, the power series $X,\,T$ form a ZCR with 
values in the WE algebra. 

It is known that every Lax pair for a $(1+1)$-dimensional system of PDEs 
determines a ZCR. 
In addition to the Wahlquist-Estabrook technique, some other methods 
to obtain ZCRs for a given system of PDEs also exist 
(see, e.g.,~\cite{marvan,marvan2008,sakovich} and references therein). 

Since~\er{xt} is the most general solution of~\er{dxdt} 
and equation~\er{mn} is similar to~\er{dxdt}, 
we obtain the following result. 
\begin{proposition}
\label{wezcr}
Suppose that $M,\,N$ are $\mg$-valued functions of variables~\er{var} 
and form a ZCR. 
Expand $M,\,N$ as power series in~\er{var} 
\begin{equation*}
M=\sum_{\al\in \mat}M^\al U^\al,\quad N=\sum_{\beta\in \mat}N^\beta U^\beta,\quad 
M^\al,\,N^\beta\in\mg.
\end{equation*}
Then the map $A^\al\mapsto M^\al,\ B^\beta\mapsto N^\beta$ 
determines a homomorphism from the WE algebra to $\mg$. 
If the coefficients $M^\al,\,N^\beta,\,\al,\beta\in \mat$, generate 
the whole Lie algebra $\mg$ then this homomorphism is surjective. 
\end{proposition}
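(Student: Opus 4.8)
The plan is to combine the universal property of the free Lie algebra $F$ with the observation that equations~\er{dxdt} and~\er{mn} expand, coefficient by coefficient, into the \emph{same} list of Lie-algebraic expressions in the generating letters, evaluated in $F$ in the first case and in $\mg$ in the second.

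First I would invoke the universal property: since $F$ is freely generated by the letters $A^\al,\,B^\beta$, the assignment $A^\al\mapsto M^\al$, $B^\beta\mapsto N^\beta$ extends uniquely to a Lie algebra homomorphism $\phi\cl F\to\mg$. Let $\phi$ act coefficientwise on formal power series in the jet variables; because it is a Lie algebra homomorphism, this extension is still bracket-preserving on power series, and it commutes with the total derivative operators $D_x$, $D_t$, which act only on the monomial coefficients $U^\al$. By construction $\phi(X)=\sum_\al\phi(A^\al)U^\al=M$ and likewise $\phi(T)=N$, so
\begin{equation*}
\phi\bigl([D_x+X,\,D_t+T]\bigr)=[D_x+M,\,D_t+N]=0,
\end{equation*}
the last equality being the zero-curvature condition~\er{mn} satisfied by $M$, $N$.

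Next I would read off what this means for generators and relations. Using the expansions of $D_x(T)$, $D_t(X)$, $[X,T]$ given above, each coefficient of a monomial in the jet variables in the series $[D_x+X,D_t+T]=D_x(T)-D_t(X)+[X,T]$ is a $\mathbb{Z}$-linear combination of the letters $A^\al$, $B^\beta$ and of the brackets $[A^\al,B^\beta]$, the scalar coefficients arising from the expansions of the polynomials $D_x(U^\beta)$, $D_t(U^\al)$, $U^\al U^\beta$ and hence being independent of the ambient Lie algebra. By definition, the defining relations of the WE algebra are the vanishing of all these combinations. Applying $\phi$ turns such a combination into the same combination of $M^\al$, $N^\beta$, $[M^\al,N^\beta]$, that is, into the corresponding coefficient of the series $[D_x+M,D_t+N]$, which is zero by the previous step. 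Hence $\phi$ annihilates every defining relation, so it factors through the quotient $F\to(\text{WE algebra})$ and yields a homomorphism $\bar\phi$ from the WE algebra to $\mg$ sending $A^\al\mapsto M^\al$, $B^\beta\mapsto N^\beta$. The image of $\bar\phi$ is the Lie subalgebra of $\mg$ generated by all the $M^\al$ and $N^\beta$; when these elements generate $\mg$, the image is all of $\mg$, which gives the surjectivity claim.

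The one point I would take care to spell out is this \emph{universality} of the relations: that the expansion of~\er{dxdt} produces Lie-algebraic expressions in the letters with Lie-algebra-independent scalar coefficients, so that a homomorphism of the coefficient Lie algebra carries the relations attached to one ZCR to those attached to another. Apart from verifying that the coefficientwise extension of $\phi$ genuinely commutes with $D_x$, $D_t$ and preserves brackets of power series, I do not expect any real obstacle: once universality is pinned down, the statement is a formal consequence of the universal property of the free Lie algebra.
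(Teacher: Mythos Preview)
Your argument is correct and is precisely the content behind the paper's one-line justification (``since \er{xt} is the most general solution of \er{dxdt} \dots''); you have simply unpacked the universal property of the free Lie algebra and the quotient construction explicitly, whereas the paper treats the proposition as immediate from the definition of the WE algebra.
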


\section{Computations for the generalized Landau-Lifshitz system}
\label{comput}

In order to study the WE algebra of system~\er{main}, 
we need to resolve (locally) the constraint $\langle S,S\rangle=1$ 
for the vector $S=(s^1(x,t),\dots,s^n(x,t))$. 
Following~\cite{mll}, we do it as 
\begin{equation}
\label{sp}
s^i=\frac{2u^i}{1+\langle u,u\rangle},\quad
i=1,\dots,n-1,\quad
s^n=\frac{1-\langle u,u\rangle}{1+\langle u,u\rangle},
\end{equation}
where $u$ is an $(n-1)$-dimensional vector with the components $u^1(x,t),\dots,u^{n-1}(x,t)$, 
and $\langle\cdot,\cdot\rangle$ is the standard scalar product.
Then one can rewrite system~\er{main} as~\cite{mll}
\begin{multline}
\label{pt}
u_t=u_{xxx}-6\langle u,u_x\rangle\Delta^{-1}u_{xx}+
\bigl(-6\langle u,u_{xx}\rangle\Delta^{-1}+24\langle u,u_{x}\rangle^2\Delta^{-2}
-6\langle u,u\rangle\langle u_x,u_{x}\rangle\Delta^{-2}\bigl)u_x+\\
\bigl(6\langle u_x,u_{xx}\rangle\Delta^{-1}-12\langle u,u_x\rangle\langle u_x,u_{x}\rangle\Delta^{-2}\bigl)u
+\frac32\Bigl(r_n+4\Delta^{-2}\sum_{i=1}^{n-1}(r_i-r_n)(u^i)^2\Bigl)u_x,
\end{multline}
where $\Delta=1+\langle u,u\rangle$ and $r_1,\dots,r_n$ are distinct complex numbers 
that are the entries of the matrix $R=\mathrm{diag}(r_1,\dots,r_n)$ from system~\er{main}.
 
Let $E_{i,j}\in\mathfrak{gl}_{n+1}(\Com)$ be the matrix with 
$(i,j)$-th entry equal to 1 and all other entries equal to zero. 
Recall that the Lie subalgebra $\mathfrak{so}_{n,1}\subset\mathfrak{gl}_{n+1}(\Com)$ 
has the following basis
$$
E_{i,j}-E_{j,i},\quad i<j\le n,\quad E_{l,n+1}+E_{n+1,l},\quad l=1,\dots,n.
$$
From the results of~\cite{mll,skr-jmp} one can obtain the following 
$\mathfrak{so}_{n,1}$-valued ZCR of system~\er{main}
\begin{gather}
\label{M}
M=\sum_{i=1}^ns^i\la_i(E_{i,n+1}+E_{n+1,i}),\\
\label{N}
N=D_x^2(M)+[D_x(M),M]+(r_1+\la_1^2)M+\bigl(\frac12(S,RS)+\frac32(S_x,S_x)\bigl)M.
\end{gather}
Here $\la_1,\dots,\la_n$ are complex parameters satisfying equations~\er{curve}. 
\begin{remark}
It was noticed in~\cite{skr} that the formulas
$\la=\la_i^2+r_i,\ y=\prod_{i=1}^n\la_i,$ 
provide a mapping from the curve~\er{curve} 
to the hyperelliptic curve $y^2=\prod_{i=1}^n(\la-r_i)$. 
However, according to~\cite{mll}, 
the curve~\er{curve} itself is not hyperelliptic.
\end{remark}
If $S=(s^1,\dots,s^n)$ is given by formulas~\er{sp} then~\er{M},~\er{N} 
determines a ZCR for system~\er{pt}. 

In the algebra $\Com[\la_1,\dots,\la_n]$
consider the ideal $I\subset\Com[\la_1,\dots,\la_n]$ 
generated by the polynomials $\la_i^2-\la_j^2+r_i-r_j,\,\ 
i,j=1,\dots,n$.  
Denote by $\bar\la_i$ the image of $\la_i$ in the quotient 
algebra $Q=\Com[\la_1,\dots,\la_n]/I$, which is equal to the 
algebra of polynomial functions on the curve~\er{curve}. 

Consider the infinite-dimensional Lie algebra over $\Com$
\begin{gather*}
\mathfrak{gl}_{n+1}(\Com)\otimes_\Com Q\cong\mathfrak{gl}_{n+1}(Q),\\ 
[g_1\otimes q_1,\,g_2\otimes q_2]=[g_1,g_2]\otimes q_1q_2,\quad 
g_i\in \mathfrak{gl}_{n+1}(\Com),\quad q_i\in Q,
\end{gather*}
and the elements 
\begin{equation}
\label{qie}
Q_i=(E_{i,n+1}+E_{n+1,i})\otimes\bar\la_i\,\in\,
\mathfrak{so}_{n,1}\otimes_\Com Q\subset 
\mathfrak{gl}_{n+1}(\Com)\otimes_\Com Q,\quad i=1,\dots,n.
\end{equation}
Denote by $L(n)\subset\mathfrak{so}_{n,1}\otimes Q$ the Lie subalgebra  
generated by~$Q_1,\dots,Q_n$. 

Obviously, the element $\bar\la=\bar\la_i^2+r_i\in Q$ does not depend on $i$.  
For $i,j=1,\dots,n$, $i\neq j$, and $k\in\mathbb{N}$ 
consider the following elements of $\mathfrak{so}_{n,1}\otimes Q$  
\begin{gather*}
Q^{2k-1}_i=(E_{i,n+1}+E_{n+1,i})\otimes\bar\la^{k-1}\bar\la_i,\quad
Q^{2k}_{ij}=(E_{i,j}-E_{j,i})\otimes\bar\la^{k-1}\bar\la_i\bar\la_j. 
\end{gather*} 
For $i,j,a,b=1,\dots,n$, $i\neq j$, $a\neq b$, and $k_1,k_2\in\mathbb{N}$ one has 
\begin{multline} 
\label{q1}
[Q^{2k_1}_{ij},\,Q^{2k_2}_{ab}]=\delta_{aj}Q^{2(k_1+k_2)}_{ib} 
-\delta_{ib}Q^{2(k_1+k_2)}_{aj}
+\delta_{jb}Q^{2(k_1+k_2)}_{ai}-\delta_{ia}Q^{2(k_1+k_2)}_{jb}\\
+r_i\delta_{ib}Q^{2(k_1+k_2-1)}_{aj}
-r_j\delta_{aj}Q^{2(k_1+k_2-1)}_{ib}+r_i\delta_{ia}Q^{2(k_1+k_2-1)}_{jb} 
-r_j\delta_{jb}Q^{2(k_1+k_2-1)}_{ai},  
\end{multline}
\begin{equation} 
\label{q2}
[Q^{2k_1}_{ij},\,Q^{2k_2-1}_{a}]=\delta_{aj}Q^{2k_1+2k_2-1}_{i}
-\delta_{ia}Q^{2k_1+2k_2-1}_{j}-r_j\delta_{aj}Q^{2k_1+2k_2-3}_{i}+r_i\delta_{ia}Q^{2k_1+2k_2-3}_{j},
\end{equation}
\begin{equation} 
\label{q3}
[Q^{2k_1-1}_{i},\,Q^{2k_2-1}_{j}]=Q^{2(k_1+k_2-1)}_{ij},\quad 
[Q^{2k_1-1}_{i},\,Q^{2k_2-1}_{i}]=0.   
\end{equation}

Since $Q^1_i=Q_i$ and $Q^{2k}_{ij}=-Q^{2k}_{ji}$, 
from~\er{q1},~\er{q2},~\er{q3} we obtain that 
$$
Q^{2k-1}_l,\quad Q^{2k}_{ij},\ \quad i,j,l=1,\dots,n,\quad 
i<j,\quad k=1,2,3,\dots,
$$
span the Lie algebra $L(n)$. 
It is easily seen that these elements are linearly independent over $\Com$
and, therefore, form a basis of $L(n)$. 

For $k\in\mathbb{N}$ set  
$$
\bar L_{2k-1}=\langle Q^{2k-1}_l\,|\,l=1,\dots,n\rangle, 
\quad \bar L_{2k}=\langle Q^{2k}_{ij}\,|\,i,j=1,\dots,n,\ i<j\rangle. 
$$ 
Here and below for elements $v_1,\dots,v_s$ of a vector space 
the expression $\langle v_1,\dots,v_s\rangle$ denotes the linear span 
of $v_1,\dots,v_s$ over $\Com$. Then from~\er{q1}, \er{q2}, \er{q3} one gets 
$$
L(n)=\bigoplus_{i=1}^\infty\bar L_i,\quad\  
[\bar L_i,\bar L_j]\subset \bar L_{i+j}+\bar L_{i+j-2}. 
$$
Therefore, the Lie algebra $L(n)$ is quasigraded~\cite{quasigraded}. 

Clearly, formulas~\er{M},~\er{N} can be regarded as a ZCR 
with values in the algebra $L(n)$. 
In particular, $M=\sum_{i=1}^n s^iQ_i$. 
Combining this with Proposition~\ref{wezcr}, we obtain the following. 
\begin{theorem}
For any $n\ge 3$ 
we have an epimorphism from the WE algebra of system~\er{main} 
onto the infinite-dimensional Lie algebra $L(n)$.
\end{theorem}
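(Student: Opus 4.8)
The plan is to apply Proposition~\ref{wezcr} to the explicit ZCR given by formulas~\er{M}, \er{N}. First I would observe that, after substituting~\er{sp}, the entries $s^i$ of $S$ become rational functions of the variables $u^1,\dots,u^{n-1}$, and $S_x$, $S_{xx}$ become rational in $u^i_k$ for $k=0,1,2$; since $N$ in~\er{N} is built from $M$, $D_x(M)$, $D_x^2(M)$ and the scalar factor $\tfrac12\langle S,RS\rangle+\tfrac32\langle S_x,S_x\rangle$, both $M$ and $N$ are $L(n)$-valued functions of the variables $u^i_k$ with $0\le k\le d-1$, where $d=3$ is the order of system~\er{pt}. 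Thus $M$ and $N$ lie exactly in the class of functions of variables~\er{var} to which Proposition~\ref{wezcr} applies, and by construction they satisfy the zero-curvature equation~\er{mn} (this is the content of the cited ZCR from~\cite{mll,skr-jmp}, restated in the excerpt). Expanding $M$ and $N$ as formal power series in the $u^i_k$ and invoking Proposition~\ref{wezcr} gives a homomorphism $\varphi$ from the WE algebra of~\er{main} (equivalently of~\er{pt}) to $L(n)$.

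Next I would check surjectivity of $\varphi$, which by the last sentence of Proposition~\ref{wezcr} amounts to showing that the Taylor coefficients $M^\al$, $N^\beta$ together generate all of $L(n)$. For this it suffices to recover the generators $Q_1,\dots,Q_n$ of $L(n)$ among these coefficients. From~\er{M} and~\er{sp} one has, for $i=1,\dots,n-1$, that $s^i=2u^i/(1+\langle u,u\rangle)=2u^i(1-\langle u,u\rangle+\cdots)$, so the coefficient of the degree-one monomial $u^i$ in the power series expansion of $M=\sum_j s^jQ_j$ is precisely $2Q_i$; and the constant term of $M$ (coming from $s^n=(1-\langle u,u\rangle)/(1+\langle u,u\rangle)$, whose value at $u=0$ is $1$) is $Q_n$. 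Hence $Q_1,\dots,Q_n$ all occur among the coefficients $M^\al$ (up to the harmless scalar $2$), so these coefficients already generate $L(n)$, and $\varphi$ is onto.

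I expect the only genuinely delicate point to be the bookkeeping needed to justify that the expansions are well-defined formal power series in the variables~\er{var} and that the verification of~\er{mn} for the rational expressions~\er{M}, \er{N} can be organized within the formal-power-series framework of Section~\ref{wea} rather than merely as an identity of analytic functions; but since each $s^i$ and each $\langle\cdot,\cdot\rangle$ term is analytic at $u=0$ with denominators not vanishing there, the expansions exist and the equation~\er{mn}, being an identity of analytic functions, holds coefficient-by-coefficient in the power series ring. Everything else is a direct appeal to Proposition~\ref{wezcr} together with the observation in the line ``$M=\sum_{i=1}^n s^iQ_i$'' already made in the excerpt. This completes the proof for every $n\ge 3$.
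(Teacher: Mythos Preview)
Your proof is correct and follows essentially the same approach as the paper: the paper simply remarks that formulas~\er{M},~\er{N} constitute a ZCR with values in $L(n)$, that $M=\sum_{i=1}^n s^iQ_i$, and then invokes Proposition~\ref{wezcr}. Your argument is a more detailed version of the same reasoning, in which you explicitly verify that $M$ and $N$ depend only on $u^i_k$ with $0\le k\le d-1=2$ and identify the particular Taylor coefficients of $M$ yielding the generators $Q_1,\dots,Q_n$.
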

Let us give a complete description of the WE algebra of~\er{pt} 
for small $n$. For $n=2$ system~\er{pt} is a scalar equation of the form 
$u_t=u_{xxx}+f(u,u_x,u_{xx})$. Since for such equations the WE algebras 
have already been studied quite extensively~(see, e.g, \cite{kdv1,kn} 
and references therein) and the curve~\er{curve} is rational for $n=2$, 
we skip the case $n=2$. 
For $n=3,4,5$ the WE algebras are studied below. 
\subsection{The case n=3}
\label{n3}
According to Section~\ref{wea}, we must solve equation~\er{dxdt} for
$$
X=X(u^1,u^2,u^1_x,u^2_x,u^1_{xx},u^2_{xx}),\quad 
T=T(u^1,u^2,u^1_x,u^2_x,u^1_{xx},u^2_{xx}).
$$
If we differentiate~\er{dxdt} with respect to the variables 
$u^1_{xxxxx}$, $u^2_{xxxxx}$, $u^1_{xxxx}$, $u^2_{xxxx}$, 
we get that $X$ depends only on $(u^1,u^2)$. 
Next, differentiating~\er{dxdt} with respect to $u^1_{xxx}$, $u^2_{xxx}$, $u^1_{xx}$, $u^2_{xx}$ several times, 
one obtains 
\begin{equation*}
X=X(u^1,u^2), \quad T=\frac{\pd X}{\pd u^1} u^1_{xx}+\frac{\pd X}{\pd u^2} u^2_{xx}+F_1(u^1,u^2,u^1_x,u^2_x), 
\end{equation*}
where
\begin{multline*}
F_{1} =-\frac12\left(\frac{\pd^2 X}{\pd u^1\pd u^1}\Delta+6u^1\frac{\pd X}{\pd u^1}-6u^2\frac{\pd X}{\pd u^2}\right)\Delta^{-1}(u^1_{x})^{2}\\
-\left(\frac{\pd^2 X}{\pd u^1\pd u^2}\Delta+6u^2\frac{\pd X}{\pd u^1}+6u^1\frac{\pd X}{\pd u^2}\right)\Delta^{-1}u^1_{x}u^2_{x}\\
-\frac12\left(\frac{\pd^2 X}{\pd u^2\pd u^2}\Delta+6u^2\frac{\pd X}{\pd u^2}-6u^1\frac{\pd X}{\pd u^1}\right)\Delta^{-1}(u^2_{x})^{2}
+[\frac{\pd X}{\pd u^1},X]u^1_{x}+[\frac{\pd X}{\pd u^2},X]u^2_{x}+F_{2}(u^1,u^2),
\end{multline*}
and $\Delta=1+(u^1)^2+(u^2)^2$. 

Then the expression 
\begin{equation}
\label{E}
[D_x+X,D_t+T]=D_x(T)-D_t(X)+[X,T]
\end{equation}
becomes a third degree polynomial in $u^1_{x},\,u^2_{x}$ with coefficients depending on $u^1,\,u^2$. 
By putting equal to zero the third degree coefficients, 
we get a system of linear PDEs of order 3 for the function $X(u^1,u^2)$. 
The general solution of this system is
\begin{equation}
\label{Xd}
X=\Bigl(D_0+u^1D_1+u^2D_2+\bigl((u^1)^2+(u^2)^2\bigl)D_3\Bigl)\Delta^{-1}, 
\end{equation}
where $D_i$ do not depend on $u^1,\,u^2$ and, therefore, belong to a set of generators of the WE algebra. 

Taking into account, that system~\er{pt} arises from~\er{main} by means of~\er{sp}, 
it is more convenient to rewrite~\er{Xd} as  
\begin{equation*}
X=\Bigl(2C_1u^1+2C_2u^2+C_3\bigl(1-(u^1)^2-(u^2)^2\bigl)\Bigl)\Delta^{-1}+C_0,
\end{equation*}
where $C_i$ are also elements of the WE algebra. 
By putting to zero the coefficients of~\er{E} at the monomials $(u^1_x)^2$ and $(u^2_x)^2$, 
we obtain 
\begin{equation}\label{eq.relazioni.preliminari}
[C_0,C_1]=[C_0,C_2]=[C_0,C_3]=0.
\end{equation}
Now, by putting the coefficients of $u^1_x$ and $u^2_x$ equal to zero,
one gets a system of two first order differential equations for $F_2(u^1,u^2)$. 
Its compatibility conditions are satisfied if and
only if the following relations hold 
\begin{gather}
\label{c0cij}
[C_0,[C_1,C_2]]=[C_0,[C_1,C_3]]=[C_0,[C_2,C_3]]=0,\\
\label{cijk}
[C_1,[C_2,C_3]]=[C_2,[C_3,C_1]]=[C_3,[C_1,C_2]]=0,\\
\label{113}
[C_1,[C_1,C_3]]-[C_2,[C_2,C_3]]=(r_2-r_1)C_3,\\
\label{112}
[C_1,[C_1,C_2]]-[C_3,[C_3,C_2]]=(r_3-r_1)C_2,\\
\label{221}
[C_2,[C_2,C_1]]-[C_3,[C_3,C_1]]=(r_3-r_2)C_1.
\end{gather}
Note that~\er{c0cij} follows from \eqref{eq.relazioni.preliminari}
using the Jacobi identity.

Then we are able to compute $F_2(u^1,u^2)$. 
\begin{multline*}
F_2=\Big((-2[C_3,[C_1,C_3]]+3C_1r_3)(u^1)^5+(3C_2r_3-2[C_3,[C_2,C_3]])(u^1)^4u^2\\
+(6C_1r_3-4[C_3,[C_1,C_3]])(u^1)^3(u^2)^2+(-4[C_3,[C_2,C_3]]+6C_2r_3)(u^1)^2(u^2)^3\\
+(-2[C_3,[C_1,C_3]]+3C_1r_3)u^1(u^2)^4+(3C_2r_3-2[C_3,[C_2,C_3]])(u^2)^5\\
+(2C_3r_2+2[C_2,[C_2,C_3]]-2C_3r_1+3C_3r_3)(u^1)^4\\
+(4[C_2,[C_2,C_3]]+2C_3r_2-2C_3r_1+6C_3r_3)(u^1)^2(u^2)^2\\
+(2[C_2,[C_2,C_3]]+3C_3r_3)(u^2)^4+(4C_1r_1-4[C_3,[C_1,C_3]]+2C_1r_3)(u^1)^3\\
+(2C_2r_3-4[C_3,[C_2,C_3]]+4C_2r_1)(u^1)^2u^2+(-4[C_3,[C_1,C_3]]+4C_1r_2+2C_1r_3)u^1(u^2)^2\\
+(4C_2r_2+2C_2r_3-4[C_3,[C_2,C_3]])(u^2)^3+(4[C_2,[C_2,C_3]]+4C_3r_2+2C_3r_1)(u^1)^2\\
+(4[C_2,[C_2,C_3]]+6C_3r_2)(u^2)^2+(-2[C_3,[C_1,C_3]]+3C_1r_3)u^1\\
+(3C_2r_3-2[C_3,[C_2,C_3]])u^2+C_3r_3+2C_3r_2+2[C_2,[C_2,C_3]]\Big)\Delta^{-3}+C',
\end{multline*}
where $C'$ is another generator of the WE algebra.

Then~\er{E} depends only on $u^1, u^2$ and vanishes if and only if 
\begin{gather}
\label{c3c'}
[C_3,C']=0,\\
\label{c0c'}
[C_0,C']=0,\\ 
\label{c3223}
[C_3,[C_2,[C_2,C_3]]]=0, \\
\label{c3313}
[C_3,[C_3,[C_3,C_1]]]=\frac32 r_3[C_1,C_3]+[C_1,C'],\\
\label{c3323}
[C_3,[C_3,[C_3,C_2]]]=\frac32 r_3[C_2,C_3]+[C_2,C'],\\
\label{c2312}
[[C_2,C_3],[C_1,C_2]]=\frac12 r_3[C_1,C_3]+[C_1,C']+r_2[C_1,C_3],\\
\label{c2223}
[C_2,[C_2,[C_2,C_3]]]=-\frac12 r_3[C_2,C_3]-[C_2,C']-r_2[C_2,C_3].
\end{gather}

Therefore, the WE algebra of~\er{pt} for $n=3$ is given by the generators 
$C'$, $C_0$, $C_1$, $C_2$, $C_3$ and all relations obtained in this subsection. 
Let us simplify the structure of these relations. 
Taking into account~\er{eq.relazioni.preliminari} and~\er{cijk}, 
relations~\er{c3c'},~\er{c0c'},~\er{c3223},~\er{c2312},~\er{c2223} imply 
\begin{equation}
\label{c'i}
[C'+\big(r_2+\frac12 r_3\big)C_3+[C_2,[C_2,C_3]],\,C_i]=0,\quad i=0,1,2,3.
\end{equation}
Then it is easily seen that all relations follow from~\er{eq.relazioni.preliminari}, 
\er{cijk}, \er{113}, \er{112}, \er{221}, \er{c'i} using the Jacobi identity. 
For example, let us prove this for~\er{c3223} and~\er{c3313}, 
for all other relations the statement can be proved analogously.   
Applying $\mathrm{ad}\,C_3$ to~\er{113}, $\mathrm{ad}\,C_2$ to~\er{112}, and $\mathrm{ad}\,C_1$ to~\er{221}, 
one obtains 
\begin{equation*}
[C_i,[C_j,[C_j,C_i]]]=[C_i,[C_k,[C_k,C_i]]]\quad\text{for}\quad\{i,j,k\}=\{1,2,3\}.
\end{equation*}
Besides, by the Jacobi identity, we have $[C_i,[C_j,[C_j,C_i]]]=-[C_j,[C_i,[C_i,C_j]]]$. 
Therefore,
\begin{multline*}
[C_3,[C_2,[C_2,C_3]]]=-[C_2,[C_3,[C_3,C_2]]]=-[C_2,[C_1,[C_1,C_2]]]=\\
[C_1,[C_2,[C_2,C_1]]]=[C_1,[C_3,[C_3,C_1]]]=-[C_3,[C_1,[C_1,C_3]]]=-[C_3,[C_2,[C_2,C_3]]],
\end{multline*}
which implies~\er{c3223}.

Using~\er{221},~\er{cijk}, and the Jacobi identity, one obtains 
\begin{multline*}
[C_3,[C_3,[C_3,C_1]]]=[C_3,[C_2,[C_2,C_1]]]+(r_2-r_3)[C_3,C_1]=\\ 
[[C_3,C_2],[C_2,C_1]]+(r_2-r_3)[C_3,C_1]=-[C_1,[C_2,[C_2,C_3]]]+(r_2-r_3)[C_3,C_1].
\end{multline*}
Combining this with~\er{c'i} for $i=1$, we get~\er{c3313}.

Thus the WE algebra is isomorphic to the direct sum $\mg(3)\oplus A$, where $\mg(3)$ is the Lie algebra given 
by the generators $C_1,\,C_2,\,C_3$ and relations~\er{cijk},~\er{113},~\er{112},~\er{221}, 
and $A$ is the two-dimensional abelian Lie algebra spanned by $C_0$, $C'+\big(r_2+\dfrac12 r_3\big)C_3+[C_2,[C_2,C_3]]$.  

\subsection{The case n=4}
\label{n4}
According to Section~\ref{wea}, we must solve equation~\er{dxdt} for
$$
X=X(u^1,u^2,u^3,u^1_x,u^2_x,u^3_x,u^1_{xx},u^2_{xx},u^3_{xx}),\quad 
T=T(u^1,u^2,u^3,u^1_x,u^2_x,u^3_x,u^1_{xx},u^2_{xx},u^3_{xx}).
$$
Similarly to the previous subsection we obtain the following.
$$
X=\Big(2C_1u^1+2C_2u^2+2C_3u^3+C_4\big(1-(u^1)^2-(u^2)^2-(u^3)^2\big)\Big)\Delta^{-1}+C_0.
$$
\begin{multline*}
T=\sum_{i=1}^3 \frac{\pd X}{\pd u^i} u^i_{xx}-\frac12\sum_{i=1}^{3}\left(\frac{\pd^2 X}{\pd u^i\pd u^i}\Delta+6\Big(\frac{\pd X}{\pd u^i}u^i-\sum_{j\neq i}\frac{\pd X}{\pd u^j}u^j\Big)\right)\Delta^{-1}
(u^i_{x})^2\\
-\sum_{1\le i<j\le 3}\left(\frac{\pd^2 X}{\pd u^i\pd u^j}\Delta+6\Big(\frac{\pd X}{\pd u^i}u^j+\frac{\pd X}{\pd u^j}u^i\Big)\right)\Delta^{-1}u_x^iu_x^j
+\sum_{i=1}^3[\frac{\pd X}{\pd u^i},X]u_{x}^{i}+F_{2}(u^1,u^2,u^3), 
\end{multline*}
where $\Delta=1+(u^1)^2+(u^2)^2+(u^3)^2$ and 
\begin{multline}
\label{F_2n4}
F_2=\Big((-2[C_4,[C_1,C_4]]+3C_1r_4)(u^1)^5+(3C_2r_4-2[C_4,[C_2,C_4]])(u^1)^4u^2\\
+(3C_3r_4-2[C_4,[C_3,C_4]])(u^1)^4u^3+(-4[C_4,[C_1,C_4]]+6C_1r_4)(u^1)^3(u^2)^2\\
+(-4[C_4,[C_1,C_4]]+6C_1r_4)(u^1)^3(u^3)^2+(-4[C_4,[C_2,C_4]]+6C_2r_4)(u^1)^2(u^2)^3\\
+(6C_3r_4-4[C_4,[C_3,C_4]])(u^1)^2(u^2)^2u^3+(-4[C_4,[C_2,C_4]]+6C_2r_4)(u^1)^2u^2(u^3)^2\\
+(6C_3r_4-4[C_4,[C_3,C_4]])(u^1)^2(u^3)^3+(-2[C_4,[C_1,C_4]]+3C_1r_4)u^1(u^2)^4\\
+(-4[C_4,[C_1,C_4]]+6C_1r_4)u^1(u^2)^2(u^3)^2+(-2[C_4,[C_1,C_4]]+3C_1r_4)u^1(u^3)^4\\
+(3C_2r_4-2[C_4,[C_2,C_4]])(u^2)^5+(3C_3r_4-2[C_4,[C_3,C_4]])(u^2)^4u^3\\
+(-4[C_4,[C_2,C_4]]+6C_2r_4)(u^2)^3(u^3)^2+(6C_3r_4-4[C_4,[C_3,C_4]])(u^2)^2(u^3)^3\\
+(3C_2r_4-2[C_4,[C_2,C_4]])u^2(u^3)^4+(3C_3r_4-2[C_4,[C_3,C_4]])(u^3)^5\\
+(2C_4r_3+2[C_3,[C_3,C_4]]+3C_4r_4-2C_4r_1)(u^1)^4\\
+(4[C_3,[C_3,C_4]]-2C_4r_2-2C_4r_1+6C_4r_4+4C_4r_3)(u^1)^2(u^2)^2\\
+(4[C_3,[C_3,C_4]]+2C_4r_3+6C_4r_4-2C_4r_1)(u^1)^2(u^3)^2\\
+(-2C_4r_2+3C_4r_4+2[C_3,[C_3,C_4]]+2C_4r_3)(u^2)^4\\
+(-2C_4r_2+4[C_3,[C_3,C_4]]+2C_4r_3+6C_4r_4)(u^2)^2(u^3)^2+(3C_4r_4+2[C_3,[C_3,C_4]])(u^3)^4\\
+(2C_1r_4-4[C_4,[C_1,C_4]]+4C_1r_1)(u^1)^3+(-4[C_4,[C_2,C_4]]+2C_2r_4+4C_2r_1)(u^1)^2u^2\\
+(-4[C_4,[C_3,C_4]]+4C_3r_1+2C_3r_4)(u^1)^2u^3+(-4[C_4,[C_1,C_4]]+2C_1r_4+4C_1r_2)u^1(u^2)^2\\
+(2C_1r_4-4[C_4,[C_1,C_4]]+4C_1r_3)u^1(u^3)^2+(-4[C_4,[C_2,C_4]]+4C_2r_2+2C_2r_4)(u^2)^3\\
+(4C_3r_2+2C_3r_4-4[C_4,[C_3,C_4]])(u^2)^2u^3+(-4[C_4,[C_2,C_4]]+4C_2r_3+2C_2r_4)u^2(u^3)^2\\
+(-4[C_4,[C_3,C_4]]+4C_3r_3+2C_3r_4)(u^3)^3+(2C_4r_1+4C_4r_3+4[C_3,[C_3,C_4]])(u^1)^2\\
+(4[C_3,[C_3,C_4]]+2C_4r_2+4C_4r_3)(u^2)^2+(4[C_3,[C_3,C_4]]+6C_4r_3)(u^3)^2\\
+(-2[C_4,[C_1,C_4]]+3C_1r_4)u^1+(3C_2r_4-2[C_4,[C_2,C_4]])u^2\\
+(3C_3r_4-2[C_4,[C_3,C_4]])u^3+2C_4r_3+2[C_3,[C_3,C_4]]+C_4r_4\Big)\Delta^{-3}+C'.
\end{multline}
Here $C',\,C_0,\,C_1,\,C_2,\,C_3,\,C_4$ are the generators of the WE algebra. 

The WE algebra is equal to the direct sum $\mg(4)\oplus A$, where $\mg(4)$ is the Lie algebra given 
by the generators $C_1,\,C_2,\,C_3,\,C_4$ and the relations
\begin{gather}
\label{rel1-n4}
[C_i,[C_i,C_k]]-[C_j,[C_j,C_k]]=(r_j-r_i)C_k\,\quad\text{for}\,\ 
i\neq k,\ j\neq k,\\
\label{rel2-n4}
[C_i,[C_j,C_k]]=0\,\quad\text{for}\,\ 
i\neq k,\ i\neq j,\ j\neq k,
\end{gather}
and $A$ is the two-dimensional abelian Lie algebra spanned by 
$$
C_0,\quad C'+\big(r_3+\dfrac12 r_4\big)C_4+[C_3,[C_3,C_4]].
$$
In particular, one has
$$
[C_i,\,C_0]=[C_i,\,C'+\big(r_3+\dfrac12 r_4\big)C_4+[C_3,[C_3,C_4]]]=0\quad\text{for}\quad i=0,1,2,3,4.
$$

\subsection{The case n=5}
\label{n5}
Similarly to the previous subsections one gets the following results.
$$
X=\Big(2C_1u^1+2C_2u^2+2C_3u^3+2C_4u^4+C_5\big(1-(u^1)^2-(u^2)^2-(u^3)^2-(u^4)^2\big)\Big)\Delta^{-1}+C_0.
$$
\begin{multline*}
T=\sum_{i=1}^4 \frac{\pd X}{\pd u^i} u^i_{xx}-\frac12\sum_{i=1}^4\left(\frac{\pd^2 X}{\pd u^i\pd u^i}\Delta+6\Big(\frac{\pd X}{\pd u^i}u^i-\sum_{j\neq i}\frac{\pd X}{\pd u^j}u^j\Big)\right)\Delta^{-1}
(u^i_{x})^2\\
-\sum_{1\le i<j\le 4}\left(\frac{\pd^2 X}{\pd u^i\pd u^j}\Delta+6\Big(\frac{\pd X}{\pd u^i}u^j+\frac{\pd X}{\pd u^j}u^i\Big)\right)\Delta^{-1}u_x^iu_x^j
+\sum_{i=1}^4[\frac{\pd X}{\pd u^i},X]u_{x}^{i}+F_{2}(u^1,u^2,u^3,u^4). 
\end{multline*}
Here $\Delta=1+(u^1)^2+(u^2)^2+(u^3)^2+(u^4)^2$ and 
\begin{equation*}
F_2=P(u^1,u^2,u^3,u^4)\Delta^{-3}+C',
\end{equation*}
where $P(u^1,u^2,u^3,u^4)$ is a cumbersome polynomial of degree 5 similar to~\er{F_2n4} 
and $C'$, $C_0$, $C_1$, $C_2$, $C_3$, $C_4$, $C_5$ are the generators of the WE algebra. 

The WE algebra is equal to the direct sum $\mg(5)\oplus A$, where $\mg(5)$ is the Lie algebra given 
by the generators $C_1,\,C_2,\,C_3,\,C_4,\,C_5$ and relations~\er{rel1-n4},~\er{rel2-n4} for $i,j,k=1,\dots,5$, 
and $A$ is the two-dimensional abelian Lie algebra spanned by 
$$
C_0,\quad C'+\big(r_4+\dfrac12 r_5\big)C_5+[C_4,[C_4,C_5]].
$$
In particular, we have
$$
[C_i,\,C_0]=[C_i,\,C'+\big(r_4+\dfrac12 r_5\big)C_5+[C_4,[C_4,C_5]]]=0\quad\text{for}\quad i=0,1,\dots,5.
$$

The explicit structure of the obtained algebras is described in the next section.

\section{Explicit structure of the algebras}
\label{repres}

Let $F$ be the free Lie algebra generated 
by the letters $p_1,\dots,p_n$, $n\ge 3$. 
Denote by $\mg(n)$ the quotient of $F$ over the relations 
\begin{gather}
\label{rel1}
[p_i,[p_i,p_k]]-[p_j,[p_j,p_k]]=(r_j-r_i)p_k\,\quad\text{for}\,\ 
i\neq k,\ j\neq k,\\
\label{rel2}
[p_i,[p_j,p_k]]=0\,\quad\text{for}\,\ 
i\neq k,\ i\neq j,\ j\neq k.
\end{gather}
The natural image of $p_i\in F$ in the quotient Lie algebra $\mg(n)$ 
is denoted by the same symbol $p_i$. 

The results of Sections~\ref{n3},~\ref{n4},~\ref{n5} can be summarized as follows. 
\begin{theorem}
\label{n345}
For $n=3,4,5$ the WE algebra of system~\er{pt} is isomorphic to the direct sum 
of $\mg(n)$ and a two-dimensional abelian Lie algebra. 
\end{theorem}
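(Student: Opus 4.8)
The statement collects the outcomes of the explicit Wahlquist--Estabrook computations performed in Sections~\ref{n3},~\ref{n4},~\ref{n5}, so the plan is to carry out that procedure for each $n\in\{3,4,5\}$ and then to recognize the resulting presentation as $\mg(n)\oplus A$ with $A$ two-dimensional abelian. First I would determine the general form of the series $X,T$ solving~\er{dxdt} for system~\er{pt}: differentiating~\er{dxdt} repeatedly in the top jet variables $u^i_{xxxxx},u^i_{xxxx},u^i_{xxx},u^i_{xx}$ forces $X=X(u^1,\dots,u^{n-1})$ and expresses $T$ as $\sum_i\frac{\pd X}{\pd u^i}u^i_{xx}$ plus a quadratic-in-$u^i_x$ term determined by $X$ plus a residual function $F_2(u^1,\dots,u^{n-1})$. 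Equating to zero the degree-$3$ part (in the $u^i_x$) of the left-hand side of~\er{dxdt} gives a third-order linear PDE system for $X$, whose general solution, written in the variables adapted to the stereographic substitution~\er{sp}, is
\[
X=\Big(2\textstyle\sum_{i=1}^{n-1}C_iu^i+C_n\big(1-\sum_{i=1}^{n-1}(u^i)^2\big)\Big)\Delta^{-1}+C_0,
\]
with $C_0,\dots,C_n$ new generators; this step is uniform in $n$ and mirrors the $\mathfrak{so}_{n,1}$-valued ZCR~\er{M}--\er{N}.

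Next I would extract the defining relations by setting to zero, in turn, the coefficients of $(u^i_x)^2$, of $u^i_xu^j_x$ ($i\neq j$), of $u^i_x$, and finally --- after solving the first-order system for $F_2$, which brings in one further generator $C'$ --- the purely $u$-dependent remainder. The quadratic coefficients give $[C_0,C_i]=0$; compatibility of the first-order system for $F_2$ gives exactly relations~\er{rel1},~\er{rel2} with $p_i$ replaced by $C_i$ (together with $[C_0,[C_i,C_j]]=0$, a Jacobi consequence of $[C_0,C_i]=0$); and the $u$-independent equations yield a further batch of relations involving $C'$ and triple brackets. The decisive step, carried out for $n=3$ in Section~\ref{n3} and to be repeated for $n=4,5$, is to show that this last batch follows --- by the Jacobi identity alone --- from~\er{rel1},~\er{rel2}, $[C_0,C_i]=0$, and the single centrality statement
\[
\big[\,C'+\big(r_{n-1}+\tfrac12 r_n\big)C_n+[C_{n-1},[C_{n-1},C_n]],\ C_i\,\big]=0,\qquad i=0,1,\dots,n.
\]
Granting this, the WE algebra is generated by $C_1,\dots,C_n$ subject only to~\er{rel1},~\er{rel2}, so the subalgebra they generate is a copy of $\mg(n)$ via $p_i\mapsto C_i$, together with the two central elements $C_0$ and $C'+(r_{n-1}+\tfrac12 r_n)C_n+[C_{n-1},[C_{n-1},C_n]]$, which span a $2$-dimensional abelian ideal $A$ with $A\cap\mg(n)=0$; hence the WE algebra is $\mg(n)\oplus A$, as claimed.

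The main obstacle is this last simplification for $n=4$ and $n=5$: the explicit $F_2$ --- the lengthy polynomial~\er{F_2n4} for $n=4$ and its longer analogue $P$ for $n=5$ --- produces a large collection of triple- and quadruple-bracket relations that must be matched against~\er{rel1},~\er{rel2} and centrality. This is bounded bookkeeping rather than a conceptual difficulty: the efficient route, exactly as in the displayed reductions of~\er{c3223} and~\er{c3313}, is first to derive from~\er{rel1} the symmetric identities $[C_i,[C_j,[C_j,C_i]]]=[C_i,[C_k,[C_k,C_i]]]$ and the antisymmetry $[C_i,[C_j,[C_j,C_i]]]=-[C_j,[C_i,[C_i,C_j]]]$, and then to rewrite every nested bracket occurring in $F_2$ using these together with~\er{rel2} and the Jacobi identity. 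I expect all of these identities to close just as for $n=3$, with the computation easily cross-checked by computer algebra, so that no genuinely new phenomenon arises beyond the growth in size of $F_2$.
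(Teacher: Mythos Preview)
Your proposal is correct and follows essentially the same approach as the paper: the theorem is nothing more than a summary of the explicit computations in Sections~\ref{n3},~\ref{n4},~\ref{n5}, and your outline reproduces exactly those steps --- reducing $X$ to a function of $u^i$ alone, solving the third-order linear PDE for $X$ to obtain the generators $C_0,\dots,C_n$, extracting the relations from successive coefficients in $u^i_x$, introducing $C'$ via $F_2$, and then reducing the resulting relations to~\er{rel1},~\er{rel2} together with centrality of $C_0$ and of $C'+(r_{n-1}+\tfrac12 r_n)C_n+[C_{n-1},[C_{n-1},C_n]]$. Your identification of the final reduction as the only laborious point, handled via the Jacobi-identity consequences you list, matches the paper's treatment precisely.
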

The next theorem describes the explicit structure of $\mg(n)$. 
\begin{theorem} 
\label{gnL}
For any $n\ge 3$ the mapping 
\begin{equation}
\label{is}
\vf\cl\mg(n)\to L(n),\quad p_i\mapsto Q_i, 
\end{equation}
is an isomorphism of Lie algebras. 
Here $L(n)\subset\mathfrak{so}_{n,1}\otimes Q$ is the infinite-dimensional 
Lie algebra defined in Section~\ref{comput}.
\end{theorem}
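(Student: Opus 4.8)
The map $\vf$ is well defined: by equations~\er{q3} one has $[Q_i,[Q_i,Q_k]]=[Q^1_i,Q^2_{ik}]$ (for $i\neq k$) and by~\er{q2} this equals $Q^3_k-r_iQ^1_k=(E_{k,n+1}+E_{n+1,k})\otimes(\bar\la-r_i)\bar\la_k$; subtracting the analogous expression with $j$ in place of $i$ gives $(r_j-r_i)Q_k$, so relation~\er{rel1} holds among the $Q_i$. Likewise, for pairwise distinct $i,j,k$ one has $[Q_i,[Q_j,Q_k]]=[Q^1_i,Q^2_{jk}]=0$ by~\er{q2} since the Kronecker deltas $\delta_{ij},\delta_{ik}$ vanish, so~\er{rel2} holds. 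Since the $Q_i$ generate $L(n)$, the homomorphism $\vf$ is surjective; it remains to prove injectivity. The strategy is to introduce in $\mg(n)$ elements $P^{2k-1}_i$ and $P^{2k}_{ij}$ mimicking the basis $Q^{2k-1}_i$, $Q^{2k}_{ij}$ of $L(n)$, to show that these elements span $\mg(n)$, and then to compare dimensions in each ``quasidegree''.

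First I would set $P^1_i=p_i$, $P^2_{ij}=[p_i,p_j]$ for $i\neq j$ (so $P^2_{ij}=-P^2_{ji}$), and then define inductively $P^{2k+1}_i$ and $P^{2k+2}_{ij}$ by the same recursions that hold in $L(n)$, e.g.\ $P^{2k+1}_k:=[P^1_i,P^2_{ik}]+r_iP^{2k-1}_k$ for any $i\neq k$, and $P^{2k}_{ij}:=[P^{2k-1}_i,P^1_j]$. The first nontrivial point is that these definitions are consistent --- that $[P^1_i,P^2_{ik}]+r_iP^{2k-1}_k$ does not depend on the choice of $i\neq k$ --- which is exactly what relation~\er{rel1} gives at the bottom level and is then propagated by induction using the Jacobi identity together with~\er{rel1},~\er{rel2}. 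Granting consistency, I would prove by induction on the quasidegree that the $P$'s reproduce inside $\mg(n)$ all the commutation rules~\er{q1},~\er{q2},~\er{q3} satisfied by the $Q$'s; since $p_1,\dots,p_n$ generate $\mg(n)$, it follows that the elements
$$
P^{2k-1}_l,\quad P^{2k}_{ij}\ \ (i<j),\qquad l,i,j=1,\dots,n,\quad k\ge 1,
$$
span $\mg(n)$. Consequently, if we set $\mg(n)_{2k-1}=\langle P^{2k-1}_l\rangle$ and $\mg(n)_{2k}=\langle P^{2k}_{ij}\,|\,i<j\rangle$, then $\mg(n)=\sum_{i}\mg(n)_i$ with $[\mg(n)_i,\mg(n)_j]\subset\mg(n)_{i+j}+\mg(n)_{i+j-2}$, and $\dim\mg(n)_{2k-1}\le n$, $\dim\mg(n)_{2k}\le n(n-1)/2$.

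Finally, $\vf$ maps $P^{2k-1}_i\mapsto Q^{2k-1}_i$ and $P^{2k}_{ij}\mapsto Q^{2k}_{ij}$ by construction, hence $\vf(\mg(n)_i)=\bar L_i$; since $\vf$ is surjective and $\dim\bar L_{2k-1}=n$, $\dim\bar L_{2k}=n(n-1)/2$ (as recorded in Section~\ref{comput}), the inequalities above are forced to be equalities, the sum $\sum_i\mg(n)_i$ is direct, and $\vf$ restricts to an isomorphism $\mg(n)_i\xrightarrow{\sim}\bar L_i$ on each summand. Therefore $\vf$ is bijective, proving the theorem. The main obstacle I anticipate is the consistency of the definition of the $P$'s and, hand in hand with it, the inductive verification that relations~\er{rel1},~\er{rel2} suffice to derive the full list~\er{q1}--\er{q3} inside $\mg(n)$; this is a somewhat delicate bookkeeping argument with the Jacobi identity, and it is plausible that it is precisely the content relegated to Lemma~\ref{lemma} and its appendix proof. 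Everything after that --- spanning, the dimension count, and the conclusion --- is then formal.
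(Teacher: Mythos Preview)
Your overall strategy---introduce elements $P^{2k-1}_i,\,P^{2k}_{ij}$ in $\mg(n)$ mirroring the basis of $L(n)$, show they span, and finish by a dimension count against the $\bar L_i$---is exactly the paper's. The difference is in the technical core. You propose to define the $P$'s inductively so that $\vf(P^\bullet)=Q^\bullet$ and then verify the \emph{exact} relations~\er{q1}--\er{q3} inside $\mg(n)$; this would make the span of the $P$'s a Lie subalgebra containing the generators, hence all of $\mg(n)$, after which your dimension argument is indeed formal. The paper instead defines the $P$'s \emph{concretely} as iterated brackets (e.g.\ $P^{2k}_{ij}=(\ad p_i)^{2k-1}p_j$), introduces the bracket-length filtration $\mg^m$, and proves only the \emph{associated-graded} analogues of~\er{q1}--\er{q3}, written $A\simeq B$ for $A-B\in\mg^{\mathrm{o}(A)-1}$. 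That is precisely the content of Lemma~\ref{lemma}: identities such as $P^{2k}_{ij}\simeq-P^{2k}_{ji}$ and $[P^{2k_1+1}_i,P^{2k_2+1}_j]\simeq P^{2(k_1+k_2+1)}_{ij}$ hold only modulo lower filtration, and from them one obtains $\dim(\mg^m/\mg^{m-1})\le\dim(L^m/L^{m-1})$, which with surjectivity forces $\vf$ to be an isomorphism.

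The payoff of the paper's route is that no consistency checks are needed in defining the $P$'s, and the identities to be verified are much lighter: every application of~\er{rel1} throws off a correction term of lower degree, which one simply discards. Your route would also succeed in principle, but proving~\er{q1}--\er{q3} exactly means carrying all those corrections through the induction---substantially more bookkeeping---and your guess that Lemma~\ref{lemma} does this exact verification is off: it does only the graded one. (Minor slip: in your recursion $P^{2k+1}_k:=[P^1_i,P^2_{ik}]+r_iP^{2k-1}_k$ the middle term should read $P^{2k}_{ik}$.)
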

\begin{proof}
For $n=3$ this theorem was proved in~\cite{ll} 
for a different matrix representation of $L(n)$,  
and we will use a similar method. 

It is easy to check that $Q_i$ satisfy 
relations~\eqref{rel1},~\eqref{rel2},  
so mapping~\eqref{is} is an epimorphism. 

Define a filtration on $L(n)$ by vector subspaces $L^m$ as follows 
\begin{gather*}
L^0=0,\quad
L^1=\langle Q_1,\dots,Q_n\rangle,\quad 
L^m=L^1+\sum_{i+j\le m}[L^i,L^j]\quad\text{for}\ m>1,\\ 
L^0 \subset L^1 \subset L^2 \subset \cdots\quad
L(n)=\bigcup_i L^i. 
\end{gather*}

From~\er{q1},~\er{q2},~\er{q3} one gets that for all $k\in\mathbb{N}$ 
\begin{gather*}
Q^{2s-1}_l,\ Q^{2s}_{ij},\quad i,j,l\!=\!1,\dots,n,\,\ i<j,\,\ s\le k, 
\ \text{form a basis of }L^{2k},\\
Q^{2s-1}_l,\ Q^{2s-2}_{ij},\quad i,j,l\!=\!1,\dots,n,\,\ i<j,\,\ s\le k, 
\ \text{form a basis of }L^{2k-1}. 
\end{gather*}
This implies 
\begin{equation}
\label{L/L}
\dim(L^m/L^{m-1})=
\left\{
\begin{array}{c}
n,\quad\text{if $m$ is odd},\\
{n(n-1)}/{2},\quad\text{if $m$ is even}.
\end{array}
\right.
\end{equation}

Consider the similar filtration on $\mg(n)$ by vector subspaces $\mg^m$ 
$$
\mg^0=0,\quad
\mg^1=\langle p_1,\dots,p_n\rangle,\quad 
\mg^m=\mg^1+\sum_{i+j\le m}[\mg^i,\mg^j]\quad\text{for}\ m>1. 
$$
Clearly, 
\begin{equation}
\label{gL}
\vf(\mg^m)=L^m.
\end{equation} 
Combining this with~\er{L/L}, we obtain that it remains to prove 
\begin{equation}
\label{g/g}
\dim(\mg^m/\mg^{m-1})\le
\left\{
\begin{array}{c}
n,\quad\text{if $m$ is odd},\\
{n(n-1)}/{2},\quad\text{if $m$ is even}.
\end{array}
\right.
\end{equation} 
Indeed, if~\er{g/g} holds then~\er{L/L} and~\er{gL} 
imply that $\vf$ is an isomorphism. 

For $n=3$ statement~\er{g/g} was proved in~\cite{ll}. 
Below we suppose $n\ge 4$. 

For each $k\in\mathbb{N}$ set 
\begin{gather*}
P^{2k}_{ij}=(\ad p_i)^{2k-1}(p_j),\quad\text{for}\quad i\neq j,\\
P^{2k-1}_{1}=(\ad p_2)^{2k-2}(p_1),\\
P^{2k-1}_{i}=(\ad p_1)^{2k-2}(p_i)\quad\text{for}\quad i=2,3,\dots,n.
\end{gather*}

We will use the following notation for iterated commutators 
$$
[a_1\,a_2\,\dots a_{k-1}\,a_k]=[a_1,[a_2,[\dots,[a_{k-1},a_k]]\dots].
$$
Also, for brevity we replace each generator $p_i$ by the corresponding index $i$. 
So, for example, 
\begin{gather*}
[ii[jjk]lk]=[p_i,[p_i,[[p_j,[p_j,p_k]],[p_l,p_k]]]],\\ 
P^{2k}_{ij}=[\ub{i\dots i}_{2k-1}j],\quad 
P^{2k-1}_{1}=[\ub{2\dots 2}_{2k-2}1],\\ 
P^{2k-1}_{i}=[\ub{1\dots 1}_{2k-2}i]\quad\text{for}\quad i=2,3,\dots,n. 
\end{gather*}
For such an iterated commutator $C$ of several $p_i$ denote 
by $\mathrm{o}(C)$ the number of $p_i$ that appear in $C$. 
For example, 
\begin{gather*}
\mathrm{o}([ij])=2,\quad 
\mathrm{o}([iij])=3,\quad \mathrm{o}([ii[jjk]lk])=7,\\
\mathrm{o}(P^{2k}_{ij})=2k,\quad \mathrm{o}(P^{2k-1}_{i})=2k-1,\quad 
\mathrm{o}([P^{2k_1}_{ij},P^{2k_2}_{i'j'}])=2(k_1+k_2). 
\end{gather*} 
For two such iterated commutators $A,\,B$ we write $A\simeq B$ if 
$$
A-B\in\mg^m\quad\text{for }\ 
m=\max(\mathrm{o}(A),\mathrm{o}(B))-1. 
$$
In particular, $A\simeq 0$ means that $A\in\mg^m$ for $m=\mathrm{o}(A)-1$. 
\begin{remark}
Denote by $\bar C$ the image of $C$ in the quotient vector space $\mg^{\mathrm{o}(C)}/\mg^{\mathrm{o}(C)-1}$. 
Since $[\mg^i,\mg^j]\subset\mg^{i+j}$, we can consider the associated graded Lie algebra 
$$
\mathrm{gr}(\mg)=\bigoplus_i\mg^{i}/\mg^{i-1}.
$$
So, $A\simeq B$ if and only if $\bar A=\bar B$ in $\mathrm{gr}(\mg)$. 
In Lemma~\ref{lemma} below we essentially obtain some identities in the algebra $\mathrm{gr}(\mg)$. 
\end{remark}

From Lemma~\ref{lemma} by induction on $k$ one obtains that 
\begin{gather*}
P^{2s-1}_l,\quad P^{2s}_{ij},\quad i,j,l=1,\dots,n,\quad i<j,\quad s\le k, 
\quad\text{span }\mg^{2k},\\
P^{2s-1}_l,\quad P^{2s-2}_{ij},\quad i,j,l=1,\dots,n,\quad i<j,\quad s\le k, 
\quad\text{span }\mg^{2k-1}, 
\end{gather*}
which implies~\er{g/g}.
This lemma is proved in the Appendix.

\begin{lemma} 
\label{lemma}
If $i,j,i',j'$ are distinct integers from $\{1,\dots,n\}$ then 
for all $k_1,\,k_2\in\mathbb{Z}_+$ one has 
\begin{gather}
\label{PP0} 
[[\ub{i\dots i}_{2k_1}j][\ub{i\dots i}_{2k_2}j]]\simeq 0,\quad 
\text{in particular},\ 
[P^{2k_1+1}_j,P^{2k_2+1}_j]\simeq 0,\\ 
\label{PP1}
P^{2(k_1+k_2+1)}_{ij}\simeq -P^{2(k_1+k_2+1)}_{ji},\\ 
\label{PP2}
[P^{2k_1}_{ij},P^{2k_2+2}_{ij}]\simeq 0\quad\text{for}\quad k_1\ge 1,\\ 
\label{PP3}
[P^{2k_1+1}_i,P^{2k_2+1}_{j}]\simeq P^{2(k_1+k_2+1)}_{ij},\\
\label{PP4}
[P^{2k_1+1}_i,P^{2k_2+2}_{ij}]\simeq P^{2(k_1+k_2)+3}_{j},\\ 
\label{PP5}
[P^{2k_1+1}_{i},P^{2k_2+2}_{i'j'}]\simeq 0,\\ 
\label{PP6}
[P^{2k_1}_{ij},P^{2k_2+2}_{i'j'}]\simeq 0\quad\text{for}\quad k_1\ge 1,\\ 
\label{PP7}
[P^{2k_1}_{ij},P^{2k_2+2}_{ij'}]\simeq -P^{2(k_1+k_2+1)}_{jj'}\quad\text{for}\quad k_1\ge 1.
\end{gather}
\end{lemma}

\end{proof}

\section{Miura type transformations} 
\label{miura}

The definition of Miura type transformations was given in the introduction. 
Consider an evolutionary system of order $d\ge 1$ 
\begin{equation}
\label{sys1}
u^i_t=F^i(u^1,\dots,u^m,\,u^1_1,\dots,u^m_1,\dots,u^1_d,\dots,u^m_d),\quad 
i=1,\dots,m,\quad u^i_k=\frac{\pd^k u^i}{\pd x^k}.
\end{equation}
Suppose that for this system we have a ZCR of the form 
\begin{gather}
\label{mn1}
[D_x+M,D_t+N]=D_x(N)-D_t(M)+[M,N]=0,\\ 
\notag
M=M(u^1,\dots,u^m),\quad 
N=(u^1,\dots,u^m,\,u^1_1,\dots,u^m_1,\dots,u^1_{d-1},\dots,u^m_{d-1}) 
\end{gather}
with values in a Lie algebra $\mg$. 
According to Proposition~\ref{wezcr}, 
such ZCR is determined 
by a homomorphism from the WE algebra to $\mg$. 

Consider a homomorphism $\rho$ from $\mg$ to the Lie algebra  
of vector fields on an $m$-dimensional manifold $W$. 
Let $w^1,\dots,w^m$ be local coordinates in $W$ and 
\begin{gather}
\label{rm}
\rho(M)=\sum_{i=1}^m a^i(w^1,\dots,w^m,u^1,\dots,u^m)\frac{\pd}{\pd w^i},\\ 
\label{rn}
\rho(N)=\sum_{i=1}^m b^i(w^1,\dots,w^m,u^1,\dots,u^m,\dots,u^1_{d-1},\dots,u^m_{d-1})\frac{\pd}{\pd w^i}. 
\end{gather}
Then relation~\er{mn1} implies that the system 
\begin{gather}
\label{wx}
\frac{\pd w^i}{\pd x}=a^i(w^1,\dots,w^m,u^1,\dots,u^m),\\ 
\label{wt}
\frac{\pd w^i}{\pd t}=b^i(w^1,\dots,w^m,u^1,\dots,u^m,\dots,u^1_{d-1},\dots,u^m_{d-1}),\\
\notag
u^i_t=F^i(u^1,\dots,u^m,\dots,u^1_d,\dots,u^m_d),\quad i=1,\dots,m, 
\end{gather}
is consistent. 
Suppose that from equations~\er{wx} one can express locally 
\begin{equation}
\label{uc}
u^i=c^i\bigl(w^1,\dots,w^m,\frac{\pd w^1}{\pd x},\dots,\frac{\pd w^m}{\pd x}\bigl),\quad i=1,\dots,m. 
\end{equation}
Substituting this to~\er{wt}, we obtain an evolutionary system 
for $w^1(x,t),\dots,w^m(x,t)$ connected by a Miura type transformation~\er{uc} 
with system~\er{sys1}. 
This is the simplest case of a more general method 
to obtain Miura type transformations from ZCRs~\cite{mtt}. 

Let us apply this construction to system~\er{pt} and its ZCR~\er{M},~\er{N},
where $\la_1,\dots,\la_n$ are again treated as complex parameters satisfying equations~\er{curve} 
and $S=(s^1,\dots,s^n)$ is given by formulas~\er{sp}. 
This ZCR takes values in the Lie subalgebra $\mathfrak{so}_{n,1}\subset\mathfrak{gl}_{n+1}(\Com)$ 
generated by the elements $A_i=E_{i,n+1}+E_{n+1,i}$, $i=1,\dots,n$.

Consider the canonical action of the group $\mathrm{GL}_{n+1}(\Com)$ on 
the projective space $\Com P^n$. It determines a homomorphism $\vf$
from $\mathfrak{gl}_{n+1}(\Com)$ to the Lie algebra of vector fields on $\Com P^n$.
In the standard coordinates $w^1,\dots,w^n$ of an affine chart $\Com^{n}$ 
of $\Com P^n$ this homomorphism reads 
$$
\vf\colon E_{ij}\mapsto w^i\sum_{l=1}^n(\delta_{j,l}-\delta_{j,n+1}w^l)\frac{\pd}{\pd w^l} 
\quad\text{for }\ i,j=1,\dots,n+1,\,\ w^{n+1}=1. 
$$
In particular, 
\begin{equation}
\label{vfA}
\vf(A_i)=\frac{\pd}{\pd w^i}-w^i\sum_{l=1}^nw^l\frac{\pd}{\pd w^l}.
\end{equation}
Clearly, vector fields~\er{vfA} are tangent to the $(n-1)$-dimensional submanifold 
$$
W=\{(w^1,\dots,w^n)\,|\,(w^1)^2+\dots+(w^n)^2=1\}\subset \Com^{n}. 
$$ 
On some neighborhood $W'\subset W$ of the point $(w^1=\dots=w^{n-1}=0,\,w^n=1)\in W$ 
we can take $w^1,\dots,w^{n-1}$ as local coordinates and 
\begin{equation}
\label{wn}
w^n=\sqrt{1-(w^1)^2-\dots-(w^{n-1})^2}.
\end{equation}
Thus we obtain the following homomorphism $\rho$ 
from $\mathfrak{so}_{n,1}$ to the Lie algebra of vector fields on $W'$ 
\begin{gather*}
\rho(A_i)=\vf(A_i)\Bigl|_{W'}=\frac{\pd}{\pd w^i}-w^i\sum_{l=1}^{n-1}w^l\frac{\pd}{\pd w^l},
\quad i=1,\dots,n-1,\\ 
\rho(A_{n})=\vf(A_{n})\Bigl|_{W'}=-\sqrt{1-(w^1)^2-\dots-(w^{n-1})^2}\,\sum_{l=1}^{n-1}w^l\frac{\pd}{\pd w^l}.
\end{gather*}
Applying this to~\er{M}, one gets 
\begin{equation}
\label{rm-new}
\rho(M)=\sum_{i=1}^{n-1}\Bigl(s^i\la_i-w^i\sum_{j=1}^ns^j\la_jw^j\Bigl)\frac{\pd}{\pd w^i},
\end{equation}
where one assumes~\er{wn},~\er{sp}.
Recall that the right-hand side of~\er{wx} is obtained from~\er{rm}. Using~\er{rm-new},~\er{sp},~\er{wn}, 
we get the following form of equations~\er{wx} in our case
\begin{multline}
\label{wx-new}
\frac{\pd w^i}{\pd x}=\\
\frac{1}{1+\sum_{j=1}^{n-1}(u^j)^2}\Biggl(2u^i\la_i-2w^i\sum_{j=1}^{n-1}\la_ju^jw^j-w^i\la_n\sqrt{1-(w^1)^2-\dots-(w^{n-1})^2}
\Bigl(1-\sum_{j=1}^{n-1}(u^j)^2\Bigl)\Biggl),\\
i=1,\dots,n-1.
\end{multline}
Computing $\rho(N)$ by means of~\er{N} and~\er{rm-new}, 
from the coefficients of the vector field~\er{rn} we obtain equations~\er{wt} of the following form 
\begin{multline}
\label{rn-new}
\frac{\pd w^i}{\pd t}=
\sum_{i=1}^{n-1}\Biggl(s^i_{xx}\la_i-w^i\sum_{j=1}^ns^j_{xx}\la_jw^j+\sum_{j=1}^n\la_i\la_jw^j(s^j_xs^i-s^i_xs^j)+\\
\bigl(r_1+\la_1^2 +\frac12(S,RS)+\frac32(S_x,S_x)\bigl)\bigl(s^i\la_i-w^i\sum_{j=1}^ns^j\la_jw^j\bigl)\Biggl)\frac{\pd}{\pd w^i},\\
i=1,\dots,n-1,
\end{multline}
where $S=(s^1,\dots,s^n)$ is given by~\er{sp} and $w^n$ is equal to~\er{wn}.

Denote by $a^i=a^i(w^1,\dots,w^{n-1},u^1,\dots,u^{n-1})$ the right-hand side of~\er{wx-new}. 
It is easily seen that the $(n-1)\times(n-1)$-matrix $||\pd a^i/\pd u^j||$ at the point 
\begin{equation}
\label{wu0}
w^1=\dots=w^{n-1}=u^1=\dots=u^{n-1}=0
\end{equation} 
is equal to $\mathrm{diag}(2\la_1,\dots,2\la_{n-1})$.  
Suppose that  
\begin{equation}
\label{la0}
\la_i\neq 0,\quad i=1,\dots,n-1,
\end{equation} 
then, by the implicit function theorem, 
on a neighborhood of the point~\er{wu0} from equations~\er{wx-new} we can express 
\begin{equation}
\label{uc-new}
u^i=c^i\bigl(w^1,\dots,w^{n-1},\frac{\pd w^1}{\pd x},\dots,\frac{\pd w^{n-1}}{\pd x}\bigl),\quad i=1,\dots,n-1.
\end{equation} 
Substituting~\er{uc-new} to~\er{rn-new}, 
one gets an evolutionary system of the form 
\begin{equation}
\label{wih}
w^i_t=H^i(w^1,\dots,w^{n-1},\dots,w^1_{xxx},\dots,w^{n-1}_{xxx}),\quad i=1,\dots,n-1,
\end{equation}
connected with system~\er{pt} by the Miura type transformation~\er{uc-new}. 

It remains to study the case when~\er{la0} does not hold. 
That is, $\la_k=0$ for some $1\le k\le n-1$. Since in equations~\er{curve} one has $r_i\neq r_j$ for $i\neq j$, 
we obtain that $\la_i\neq 0$ for $i\neq k$. 
Then at any point of the form
\begin{equation}
\label{uwl}
u^k=1,\quad u^l=w^l=0 \quad\forall\,l\neq k
\end{equation}
one has 
$$
\frac{\pd a^i}{\pd u^j}=0\quad\forall\,i\neq j,\quad
\frac{\pd a^i}{\pd u^i}=2\la_i\quad\forall\,i\neq k,\quad
\frac{\pd a^k}{\pd u^k}=2w^k\la_n\sqrt{1-(w^k)^2}.
$$ 
Therefore, at any point of the form~\er{uwl} with~$w^k\neq 0,\pm 1$ 
 the matrix $||\pd a^i/\pd u^j||$ is nonsingular, and on a neighborhood of such point from 
equations~\er{wx-new} we can again get expressions of the form~\er{uc-new} and proceed as described above. 

\section*{Appendix: proof of Lemma~\ref{lemma}}

We prove this by induction on $k_1+k_2$. 
For $k_1+k_2=0$ (that is, $k_1=k_2=0$) the statements follow 
easily  from~\er{rel1} and~\er{rel2}. 
Assume that all the statements are valid for $k_1+k_2\le m$ for some $m\in\zp$.  
We must prove them for $k_1+k_2=m+1$. 

Below $l$ is an arbitrary integer such 
that $1\le l\le n$, $l\neq i$, $l\neq j$. 
We use the notation from Section~\ref{repres}.

\textbf{Proof of~\er{PP0}.} 
First, note that by the induction assumption for $q\le m$ one has 
\begin{equation*}
[ll\ub{i\dots i}_{2q}j]=[ll[P^1_iP^{2q}_{ij}]]\simeq 
[llP^{2q+1}_j]=[l[P^1_lP^{2q+1}_{j}]]\simeq 
[lP^{2q+2}_{lj}]=[P^1_lP^{2q+2}_{lj}]\simeq P^{2q+3}_j.  
\end{equation*}
Similarly, we obtain  
\begin{equation} 
\label{lij}
[ll\ub{i\dots i}_{2q}j]\simeq [\ub{i\dots i}_{2q+2}j]\simeq   
[\ub{l\dots l}_{2q+2}j]\simeq P^{2q+3}_j\quad\forall\,q\le m. 
\end{equation}

Without loss of generality we can assume $k_2\ge 1$ in~\er{PP0}. 
Using\er{lij} and the Jacobi identity, one gets  
\begin{multline}
\label{l1}
[[\ub{i\dots i}_{2k_1}j][\ub{i\dots i}_{2k_2}j]]\simeq 
[[\ub{i\dots i}_{2k_1}j][ll\ub{i\dots i}_{2k_2-2}j]]=
-[[l\ub{i\dots i}_{2k_1}j][li\dots ij]]+[l[\ub{i\dots i}_{2k_1}j][li\dots ij]]\simeq\\
[[ll\ub{i\dots i}_{2k_1}j][i\dots ij]]-2[l[l\ub{i\dots i}_{2k_1}j][i\dots ij]]. 
\end{multline}
To obtain the last line, we used the fact that 
$$
[l[\ub{i\dots i}_{2k_1}j][l\ub{i\dots i}_{2k_2-2}j]]\simeq-[l[l\ub{i\dots i}_{2k_1}j][i\dots ij]],
$$
because $[[\ub{i\dots i}_{2k_1}j][\ub{i\dots i}_{2k_2-2}j]]\simeq 0$ 
by the induction assumption. 

Since, by~\er{lij}, 
$[ll\ub{i\dots i}_{2k_1}j]\simeq [\ub{i\dots i}_{2k_1+2}j]$, from~\eqref{l1} 
one obtains 
\begin{equation*}
[[\ub{i\dots i}_{2k_1}j][\ub{i\dots i}_{2k_2}j]]\simeq 
[[\ub{i\dots i}_{2k_1+2}j][\ub{i\dots i}_{2k_2-2}j]]-2[l[l\ub{i\dots i}_{2k_1}j][\ub{i\dots i}_{2k_2-2}j]].
\end{equation*} 
If $k_2\ge 2$, 
applying the same procedure to $[[\ub{i\dots i}_{2k_1+2}j][\ub{i\dots i}_{2k_2-2}j]]$
yields
\begin{equation*}
[[\ub{i\dots i}_{2k_1}j][\ub{i\dots i}_{2k_2}j]]\simeq 
[[\ub{i\dots i}_{2k_1+4}j][\ub{i\dots i}_{2k_2-4}j]]
-2[l[l\ub{i\dots i}_{2k_1}j][\ub{i\dots i}_{2k_2-2}j]]
-2[l[l\ub{i\dots i}_{2k_1+2}j][\ub{i\dots i}_{2k_2-4}j]].
\end{equation*}
Thus applying this procedure several times to the first summand of the right-hand side, 
we obtain 
\begin{equation}
\label{ijl}
[[\ub{i\dots i}_{2k_1}j][\ub{i\dots i}_{2k_2}j]]\simeq 
[[\ub{i\dots i}_{2(k_1+k_2)}j]j]
-2\sum_{s=1}^{k_2} [l[l\ub{i\dots i}_{2(k_1+s-1)}j][\ub{i\dots i}_{2(k_2-s)}j]].
\end{equation}
By the induction assumption and~\er{lij}, one has 
$$
[[l\ub{i\dots i}_{2(k_1+s-1)}j]i]\simeq 
[[P^1_lP^{2(k_1+s)-1}_j]P^1_i]\simeq [P^{2(k_1+s)}_{lj}P^1_i]\simeq 0.
$$
Therefore, 
\begin{equation}
\label{ijl1}
[l[l\ub{i\dots i}_{2(k_1+s-1)}j][\ub{i\dots i}_{2(k_2-s)}j]]\simeq 
[l\ub{i\dots i}_{2(k_2-s)}[l\ub{i\dots i}_{2(k_1+s-1)}j]j]]=
-[l\ub{i\dots i}_{2(k_2-s)}jl\ub{i\dots i}_{2(k_1+s-1)}j]
\end{equation}
Since, by the induction assumption and~\er{lij}, 
\begin{gather*}
[l\ub{i\dots i}_{2(k_2-s)}j]\simeq [P^1_lP^{2(k_2-s)+1}_j]\simeq 
[P^{2(k_2-s+1)}_{lj}]\simeq-[P^1_jP^{2(k_2-s)+1}_l]\simeq 
-[j\ub{i\dots i}_{2(k_2-s)}l],\\ 
[ll\ub{i\dots i}_{2(k_1+s-1)}j]\simeq [\ub{i\dots i}_{2(k_1+s)}j], 
\end{gather*}
equation~\eqref{ijl1} implies 
\begin{equation*}
[l[l\ub{i\dots i}_{2(k_1+s-1)}j][\ub{i\dots i}_{2(k_2-s)}j]]\simeq
-\bigl[l\ub{i\dots i}_{2(k_2-s)}jl\ub{i\dots i}_{2(k_1+s-1)}j\bigl]\simeq
\bigl[j\ub{i\dots i}_{2(k_2-s)}ll\ub{i\dots i}_{2(k_1+s-1)}j\bigl]\simeq 
\bigl[j\ub{i\dots i}_{2(k_1+k_2)}j\bigl]
\end{equation*}
Combining this with~\eqref{ijl} yields 
\begin{equation}
\label{ijl2}
\bigl[[\ub{i\dots i}_{2k_1}j][\ub{i\dots i}_{2k_2}j]\bigl]\simeq 
-\bigl[j\ub{i\dots i}_{2(k_1+k_2)}j\bigl]
-2k_2\bigl[j\ub{i\dots i}_{2(k_1+k_2)}j\bigl]. 
\end{equation}
For $k_1=0$ this equation implies $[j\ub{i\dots i}_{2(k_1+k_2)}j\bigl]\simeq 0$. 
Combing this with~\er{ijl2}, we obtain~\er{PP0}.

\textbf{Proof of~\er{PP1}.} 
By the induction assumption, \er{PP0}, and~\er{rel1}, 
\begin{gather*}
[\ub{i\dots i}_{2m+1}j]\simeq -[\ub{j\dots j}_{2m+1}i],\quad 
[i\ub{j\dots j}_{2m}i]\simeq 0,\\ 
[[iij]j]\simeq 0,\quad [iij]\simeq [llj],\quad [jji]\simeq [lli].   
\end{gather*}
Using this and~\er{rel2}, one gets 
\begin{multline*}
P^{2(k_1+k_2+1)}_{ij}=
[\ub{i\dots i}_{2m+3}j]=[ii\ub{i\dots i}_{2m+1}j]\simeq -[ii\ub{j\dots j}_{2m+1}i]\simeq\\ 
-[[iij]\ub{j\dots j}_{2m}i]\simeq -[\ub{j\dots j}_{2m}[iij]i]=[\ub{j\dots j}_{2m}iiij]\simeq  
[\ub{j\dots j}_{2m}illj]=[\ub{j\dots j}_{2m}[il]lj]=\\
[\ub{j\dots j}_{2m}[[il]l]j]=
-[\ub{j\dots j}_{2m}jlli]\simeq -[\ub{j\dots j}_{2m}jjji]=-P^{2(k_1+k_2+1)}_{ji}.
\end{multline*}

\textbf{Proof of~\er{PP2}.} 
By the Jacobi identity, 
\begin{equation}
\label{PP2p}
[P^{2k_1}_{ij},P^{2k_2+2}_{ij}]=[[\ub{i\dots i}_{2k_1}j][\ub{i\dots i}_{2k_2+1}j]]=
-[[\ub{i\dots i}_{2k_1}j][\ub{i\dots i}_{2k_2}j]]+
[i[\ub{i\dots i}_{2k_1-1}j][\ub{i\dots i}_{2k_2}j]].  
\end{equation}
By the induction assumption and~\er{lij}, 
$$
[[\ub{i\dots i}_{2k_1-1}j][\ub{i\dots i}_{2k_2}j]]\simeq [P^{2k_1}_{ij}P^{2k_2+1}_{j}]\simeq 
P^{2(k_1+k_2)+1}_i.
$$
Substituting this to~\er{PP2p} and using~\er{PP0}, we obtain  
\begin{equation*}
[P^{2k_1}_{ij},P^{2k_2+2}_{ij}]\simeq
-[[\ub{i\dots i}_{2k_1}j][\ub{i\dots i}_{2k_2}j]]+[i,P^{2(k_1+k_2)+1}_i]\simeq 0. 
\end{equation*}

\textbf{Proof of~\er{PP3}.} 
By~\er{lij} and the induction assumption of~\er{PP5}, 
for any $q_1,\,q_2\in\mathbb{Z}_+$ 
such that $q_1+q_2\le 2(k_1+k_2)-1$ and $q_1+q_2$ is odd  one has 
$$
[[\ub{l\dots l}_{q_1}i][\ub{l\dots l}_{q_2}j]]\simeq 0. 
$$
Using this and the Jacobi identity, we get  
\begin{multline*}
[[\ub{l\dots l}_{2k_1}i][\ub{l\dots l}_{2k_2}j]]\simeq 
-[[\ub{l\dots l}_{2k_1+1}i][\ub{l\dots l}_{2k_2-1}j]]\simeq \\ 
[[\ub{l\dots l}_{2k_1+2}i][\ub{l\dots l}_{2k_2-2}j]]\simeq\dots 
\simeq [[\ub{l\dots l}_{2(k_1+k_2)}i]j]=-[j\ub{l\dots l}_{2(k_1+k_2)}i]  
\end{multline*}
Combining this with~\eqref{lij} and~\er{PP1} yields 
\begin{multline*}
[P^{2k_1+1}_i,P^{2k_2+1}_{j}]\simeq [[\ub{l\dots l}_{2k_1}i][\ub{l\dots l}_{2k_2}j]]\simeq \\
-[j\ub{l\dots l}_{2(k_1+k_2)}i]\simeq -[j\ub{j\dots j}_{2(k_1+k_2)}i]=
-P^{2(k_1+k_2+1)}_{ji}\simeq P^{2(k_1+k_2+1)}_{ij}. 
\end{multline*}

\textbf{Proof of~\er{PP4}.} 
Consider first the case $k_1=0$, then 
\begin{equation*}
[P^{2k_1+1}_i,P^{2k_2+2}_{ij}]=[i\ub{i\dots i}_{2k_2+1}j]. 
\end{equation*}
Set $c=1$ if $j\neq 1$ and $c=2$ if $j=1$. 
If $i=c$ then $[i\ub{i\dots i}_{2k_2+1}j]=P_j^{2(k_1+k_2)+3}$ 
and the proof is complete. 

Suppose that $i\neq c$, then using~\er{PP1} one gets 
\begin{equation*}
[P^{2k_1+1}_i,P^{2k_2+2}_{ij}]=[i\ub{i\dots i}_{2k_2+1}j]
\simeq -[i\ub{j\dots j}_{2k_2+1}i].  
\end{equation*}
Since, by ~\er{PP0}, 
$[i\ub{j\dots j}_{2k_2}i]\simeq 0$, using~\er{lij} and $[[ij]c]=0$ 
we obtain 
\begin{multline*}
[P^{2k_1+1}_i,P^{2k_2+2}_{ij}]\simeq -[i\ub{j\dots j}_{2k_2+1}i]\simeq 
-[[ij]\ub{j\dots j}_{2k_2}i]\simeq -[[ij]\ub{c\dots c}_{2k_2}i]=\\
-[\ub{c\dots c}_{2k_2}[ij]i]=[\ub{c\dots c}_{2k_2}[iij]]\simeq 
[\ub{c\dots c}_{2k_2}[ccj]]=P_j^{2(k_1+k_2)+3}.  
\end{multline*}

Now consider the case $k_1\ge 1$. 
By~\er{lij}, for any $l\neq i$, $l\neq j$ we have
\begin{equation*}
[P^{2k_1+1}_i,P^{2k_2+2}_{ij}]=-[P^{2k_2+2}_{ij},P^{2k_1+1}_i]\simeq
-[[\ub{i\dots i}_{2k_2+1}j][\ub{l\dots l}_{2k_1}i].  
\end{equation*}
Since due to the induction assumption of~\er{PP5} one has $[[\ub{i\dots i}_{2k_2+1}j]l]\simeq 0$, 
using~\er{lij} we get  
\begin{equation*}
[P^{2k_1+1}_i,P^{2k_2+2}_{ij}]\simeq
-[[\ub{i\dots i}_{2k_2+1}j][\ub{l\dots l}_{2k_1}i]\simeq 
-[\ub{l\dots l}_{2k_1}[\ub{i\dots i}_{2k_2+1}j]i]=
[\ub{l\dots l}_{2k_1}\ub{i\dots i}_{2k_2+2}j]\simeq 
[\ub{l\dots l}_{2(m+2)}j].  
\end{equation*}
In the consideration of the case $k_1=0$ we showed that for any $i\neq j$ 
$[\ub{i\dots i}_{2(m+2)}j]\simeq P_j^{2m+5}$. (Recall that $m=k_1+k_2-1$).  
Therefore, 
$$
[\ub{l\dots l}_{2(m+2)}j]\simeq P_j^{2(k_1+k_2)+3}.
$$ 

\textbf{Proof of~\er{PP5}.} 
Consider first the case $k_1\ge 1$. 
Since $n\ge 4$, there is $l\in\{1,\dots,n\}$ such that 
$l\neq i$, $l\neq i'$, $l\neq j'$. 
Then, by the induction assumption of~\er{PP5}, 
$$
[[\ub{i'\dots i'}_{2k_2+1}j']l]\simeq 0,\quad 
[[\ub{i'\dots i'}_{2k_2+1}j']i]\simeq 0. 
$$ 
Using this and~\er{lij}, one gets 
\begin{equation}
\label{PlP}
[P^{2k_1+1}_{i},P^{2k_2+2}_{i'j'}]=-[P^{2k_2+2}_{i'j'},P^{2k_1+1}_{i}]\simeq 
[[\ub{i'\dots i'}_{2k_2+1}j']\ub{l\dots l}_{2k_1}i]\simeq 
[l\dots l[\ub{i'\dots i'}_{2k_2+1}j']i]\simeq 0. 
\end{equation}
If we set $k_2=0$ then~\er{PlP} implies that for any distinct integers 
$c_1,c_2,c_3,c_4\in\{1,\dots,n\}$ 
\begin{equation}
\label{ccc'}
[[c_1c_2]\ub{c_4\dots c_4}_{2m+2}c_3]\simeq 0. 
\end{equation}
Since $[\ub{c_4\dots c_4}_{2m+2}c_3]\simeq [\ub{c_2\dots c_2}_{2m+2}c_3]$ due to~\er{lij}, 
from~\er{ccc'} we obtain 
\begin{equation}
\label{ccc}
[[c_1c_2]\ub{c_2\dots c_2}_{2m+2}c_3]\simeq 0. 
\end{equation}
By the Jacobi identity, \er{ccc}, and~\er{lij},  
\begin{equation}
\label{ccc1}
[c_1\ub{c_2\dots c_2}_{2m+3}c_3]\simeq 
[c_2c_1\ub{c_2\dots c_2}_{2m+2}c_3]\simeq 
[c_2\ub{c_1\dots c_1}_{2m+3}c_3].
\end{equation}
Also, property~\er{PP1} implies 
\begin{equation}
\label{ccc2}
[c_1\ub{c_2\dots c_2}_{2m+3}c_3]\simeq -[c_1\ub{c_3\dots c_3}_{2m+3}c_2].
\end{equation} 
It remains to study the case $k_1=0$. 
Using~\er{ccc1} and~\er{ccc2}, we get 
\begin{multline*}
[P^{2k_1+1}_{i},P^{2k_2+2}_{i'j'}]=
[i\ub{i'\dots i'}_{2k_2+1}j']\simeq 
[i'\ub{i\dots i}_{2k_2+1}j']\simeq
-[i'j'\dots j'i]\simeq 
-[j'i'\dots i'i]\simeq\\
[j'i\dots ii']\simeq
[ij'\dots j'i']\simeq 
-[ii'\dots i'j']=-[P^{2k_1+1}_{i},P^{2k_2+2}_{i'j'}]. 
\end{multline*}
Therefore, $[P^{2k_1+1}_{i},P^{2k_2+2}_{i'j'}]\simeq 0$. 

\textbf{Proof of~\er{PP6}.} By~\er{PP5}, 
$$
[i,P^{2k_2+2}_{i'j'}]\simeq 0,\quad [j,P^{2k_2+2}_{i'j'}]\simeq 0.  
$$ 
Obviously, this implies~\er{PP6}. 

\textbf{Proof of~\er{PP7}.} 
By property~\er{PP5}, $[[\ub{i\dots i}_{2k_1-1}j]j']\simeq 0$. 
Using this, equation~\er{lij}, and~\er{PP1}, one obtains
\begin{multline*} 
[P^{2k_1}_{ij},P^{2k_2+2}_{ij'}]\simeq -[P^{2k_1}_{ij},P^{2k_2+2}_{j'i}]=
-[[\ub{i\dots i}_{2k_1-1}j]\ub{j'\dots j'}_{2k_2+1}i]\simeq
-[\ub{j'\dots j'}_{2k_2+1}[\ub{i\dots i}_{2k_1-1}j]i]=\\
[\ub{j'\dots j'}_{2k_2+1}\ub{i\dots i}_{2k_1}j]\simeq
[\ub{j'\dots j'}_{2k_2+1}\ub{j'\dots j'}_{2k_1}j]=P^{2(k_1+k_2+1)}_{j'j}\simeq 
-P^{2(k_1+k_2+1)}_{jj'}.
\end{multline*}

\section*{Acknowledgements}
The authors thank T.~Skrypnyk and V.~V.~Sokolov for helpful discussions. 
Work of S.I. is supported by the NWO VENI grant 639.031.515.  
JvdL is partially supported by the European Union through the FP6 Marie
Curie Grant (ENIGMA) and the European Science Foundation (MISGAM). 
S.I. thanks the Max Planck Institute for Mathematics (Bonn, Germany) 
for its hospitality and excellent working conditions during  02.2006-01.2007, 
when a part of this research was done.

\end{document}